\begin{document}
\title{Secured Distributed Algorithms Without Hardness Assumptions}


\author{Leonid Barenboim \inst{1}\and Harel Levin \inst{1,2}}

\institute{
Department of Mathematics and Computer Science, The Open University of Israel, P.O.B. 808, Ra'anana, Israel \\
\and Department of Physics, Nuclear Research Center-Negev, P.O.B. 9001, Be'er-Sheva, Israel}

\maketitle      

\begin{abstract}
We study algorithms in the distributed message-passing model that produce secured output, for an input graph $G$. Specifically, each vertex computes its part in the output, the entire output is correct, but each vertex cannot discover the output of other vertices, with a certain probability. This is motivated by high-performance processors that are embedded nowadays in a large variety of devices. Furthermore, sensor networks were established to monitor physical areas for scientific research, smart-cities control, and other purposes. In such situations, it no longer makes sense, and in many cases it is not feasible, to leave the whole processing task to a single computer or even a group of central computers. As the extensive research in the distributed algorithms field yielded efficient decentralized algorithms for many classic problems, the discussion about the security of distributed algorithms was somewhat neglected. Nevertheless, many protocols and algorithms were devised in the research area of secure multi-party computation problem (MPC or SMC). However, the notions and terminology of these protocols are quite different than in classic distributed algorithms. As a consequence, the focus in those protocols was to work for every function $f$ at the expense of increasing the round complexity, or the necessity of several computational assumptions. In this work, we present a novel approach, which rather than turning existing algorithms into secure ones, identifies and develops those algorithms that are inherently secure (which means they do not require any further constructions). This approach yields efficient secure algorithms for various locality problems, such as coloring, network decomposition, forest decomposition, and a variety of additional labeling problems. Remarkably, our approach does not require any hardness assumption, but only a private randomness generator in each vertex. This is in contrast to previously known techniques in this setting that are based on a pre-construction phase, shared randomness or public-key encryption schemes. 
\keywords{distributed algorithms, privacy preserving, graph coloring, generic algorithms, multi-party computation}
\end{abstract}


\pagebreak
\section{Introduction}
Over the last few decades, computational devices get smaller and are embedded in a wide variety of products. High-performance processors are embedded in smart phones, wearable devices and smart home devices. Furthermore, sensor networks were established to monitor physical areas for scientific research, smart-cities control and  other purposes. In such situations, it no longer makes sense, and in many cases it is not feasible, to leave the whole processing task to a single computer or even a group of central computers. In the distributed algorithms research field, all the processors are employed to solve a problem together. The basic assumption is that all the processors run the same program simultaneously. The network topology is represented by a graph $G=(V,E)$ where each processor (also referred as \textit{node}) is represented by a vertex, $v \in V$. Each communication line between a pair of processors $v,u\in V$ in the network is represented by an edge $(v,u) \in E$. \par
The time complexity of a distributed algorithm is measured by rounds. Each round consists of three steps: (1) Each processor receives the messages that were sent by its neighbors on the previous round. (2) Each processor performs a local computation. (3) Each processor may send messages to its neighbors. The time complexity of distributed algorithms is measured by the number of rounds necessary to complete an algorithm. Local computations (that is, computations performed inside the nodes) are not taken into account in the running time analysis in this model. \par
Despite the extensive research in the distributed algorithms field in the last decades, the discussion about the security of distributed algorithms was somewhat neglected. Nevertheless, many protocols and algorithms were devised in the research area of cryptography and network security. The secure multi-party computation problem (MPC or SMC) is one of the main problems in the cryptography research. However, the notions and terminology of these protocols is quite different than in classic distributed algorithms. Moreover, most of these protocols assume the network forms a complete graph. Additionally, the protocols have no restriction on the amount of communication between the nodes. \par
In this work we devise secure distributed algorithms, in the sense that the output of each processor is not revealed to others, even though the overall solution expressed by all outputs is correct. Our notion of security is the following. Consider a problem where the goal is assigning a label to each vertex or edge of the graph $G = (V,E)$, out of a range $[t]$, for some positive $t$. A secure algorithm is required to compute a proper labeling, such that for any vertex $v \in V$, (respectively edge $e \in E$) the other vertices in $V$ (resp. edges in $E$) are not aware of the label of $v$ (resp. $e$). Moreover, other vertices or edges can guess the label with probability at most $1/\lambda$, for an appropriate parameter $\lambda \leq t$. Note that this requirement can be achieved if each participant $v$ (resp. $e$) in the network computes a set of labels $\{l^1,l^2,...,l^{\lambda}\}$ ($l^i \in [t]$), such that any selection from its set forms a proper solution, no matter which selections are made in the sets of other participants. For example, in a proper coloring problem, if each vertex computes a set of colors (rather than just one color), and the set is disjoint from the sets of all its neighbors, the goal is achieved. In this case each
participant draws a solution from its set of labels uniformly at random. The result is kept secret by the participant, and thus others can guess it with probability at most $1/\lambda$. Thus, if the number of labels is small, the possibility of guessing a result of a vertex becomes quite large, inevitably. As we will demonstrate later, one can artificially increase the amount of labels to achieve smaller probabilities. However, when it is impossible to use a large number of labels, other techniques can be taken into account (such as Parter and Yogev's compiler \cite{parter2017distributed}). Nevertheless, our method is applicable to various distributed problems. Moreover, the overhead caused by the privacy preserving is negligible as the round complexity of our algorithms is similar to the best known (non privacy preserving) algorithms. A summary is found in Table \ref{table:results}. The parameter $\lambda$ is referred to as the \textit{solution domain} in Table \ref{table:results}. The ratio between $t$ and $\lambda$ is referred to as the \textit{contingency factor}. These terms will be discussed later in Chapter \ref{chap:inheren}. \par

\begin{table}
    \caption{List of inherently secure algorithms and their privacy attributes}
    \label{table:results}
    \resizebox{1\textwidth}{!}{
    \begin{tabular}{lcccc}
        
         \textbf{Problem} & \textbf{Type of Graph} & \textbf{Rounds Complexity} & \textbf{Solution Domain Size} & \textbf{Contingency Factor} \\
         
         $3\Delta$-Coloring & Oriented trees & $O(\log^*{n})$ & $\Delta$ & 3 \\
         $2c\cdot \Delta\log{n}$-Coloring & General & $O(1)$ & $c\cdot\log{n}/2$ & $O(\Delta)$ \\
         $O(\Delta^2)$-Coloring & General & $\log^*{n}+O(1)$ & $\Delta$ & $O(\Delta)$ \\
         $p$-Defective $O\left(\left(\frac{\Delta}{p}\right)^2\right)$-Coloring & General & $O(\log^*{n})$ & $O\left(\frac{\Delta}{p}\right)$ & $O\left(\frac{\Delta}{p}\right)$ \\
         $2a\cdot c\cdot \log{n}$-Coloring & Bounded Arboricity $a$ & $O(\log{n})$ & $O(\log{n})/2$ & $O(a)$ \\
         
         $(O(\log{n}), O(c\cdot\log{n}))$-Network Decomposition & General & $O(\log^2{n})$ & $c>1$ & $O(\log{n})$ \\
         
         
         $\Delta$-Forest Decomposition & General & $O(1)$ & $\Delta\choose {|E(v)|}$ & 1 \\
         $(2+\epsilon)\cdot a$-Forest Decomposition & Bounded Arboricity $a$ & $O(\log{n})$ & ${((2+\epsilon)\cdot a)}\choose {|E(v)|}$ & 1 \\
         
         
         $O(\Delta\log{n})$-Edge Coloring & General & $O(1)$ & $c\cdot\log{n}$ & $O(\Delta)$ \\
         $O(\Delta^2)$-Edge Coloring & General & $\log^*{(n)}+O(1)$ & $(2\Delta-1)$ & $O(\Delta)$ \\
         $p$-Defective $O\left(\left(\frac{\Delta}{p}\right)^2\right)$-Edge Coloring & General & $O(1)$ & $O\left(\left(\frac{\Delta}{p}\right)^2\right)$ & 1 \\
         
         
         $(t\cdot \sqrt{\Delta})$-Edge Coloring of a Dominating Set & General & $\tilde{O}(\log{\Delta}+\log^3{\log{n}})$ & $t$ & $\sqrt{\Delta}$ \\
         
         
    \end{tabular}
    }
    \vspace{-10pt}
\end{table}

\section{Background}
\subsection{Distributed Algorithms}
Given a network of $n$ processors (or \textit{nodes}), consider a graph $G=(V,E)$ such that $V={v_1,v_2,...,v_n}$ is a set of vertices, each represents a processor. For each two vertices $u,v \in V$, there is an edge $(u,v) \in E$ if and only if the two processors corresponding to the vertices $u,v$ have a communication link between them. A communication link may be unidirectional or bidirectional, resulting in an undirected or a directed graph (respectively). Unless stated otherwise, the graphs in this work are simple, undirected and unweighted. \par
Two vertices $u, v \in V$ are \textit{independent} if and only if $(u,v) \notin E$. The \textit{neighbors} set of a vertex $v \in V$, $\Gamma(v)$ consists of all the vertices in $V$ that share a mutual edge with $v$ in $E$. Formally, $\Gamma(v) = \{u\in V | (u,v) \in E\}$. The \textit{degree} of a vertex $v \in V$, $deg(v) = |\Gamma(v)|$. Note that $0 \leq deg(v) \leq n-1$. The \textit{maximum degree} of graph $G$, $\Delta(G)$, is the degree of the vertex $v \in V$ which has the maximum number of neighbors. If the graph $G$ is directed, the out (respectively, in) degree of vertex $v \in V$ ($deg_{out}(v)$ and resp. $deg_{in}(v)$) is the number of edges $(u,v) \in E$ ($u \in V$) with orientation that goes out from (respectively, in to) vertex $v$. \par
Throughout this paper, $\mathcal{LOCAL}$ model will be used as the message-passing model. In this model, each communication line can send at each round an unrestricted amount of bits. It means that the primary measure is the number of rounds each node needs to "consult" its neighborhood by sending messages. This is in contrast to $\mathcal{CONGEST}$ model, where the bandwidth on each communication line on each cycle is bounded by $O(\log{n})$. \par
A single bit can pass from one endpoint of the graph to the other endpoint in $D(G)$ rounds (where $D(G)$ is the diameter of graph $G$). Thus, in the $\mathcal{LOCAL}$ model we usually look for time complexity lower than $O(D(G))$ and even sub-logarithmic (in terms of $|V|$), since all the nodes can learn the entire topology of the graph in $O(D(G))$ rounds and then perform any computation on the entire graph. Consequently, the research in local distributed algorithms is focused on solving those graph theory problems which have solutions that depend on the local neighborhood of each vertex rather than the entire graph topology. \par
Most of the problems in graph theory may be classified into two types. The first type is a bipartition of the graph (whether the vertices, the edges, or both) into two sets. For some problems both of these sets are of interest, and for other problems only one of the sets, while the other sets may be categorized as "all the rest". Examples of such problems include Maximal Independent Set and Maximal Matching. The other type of problems partitions the graph into several sets. This type of problems may be referred as "labeling" problems where there is a set of valid labels and every part of the graph is labeled by a unique label. Examples of such problems include Coloring and Network Decomposition.
\par In the following sections we will present some of the problems in the field of graph theory which exploit the potential of distributed algorithms. While we discuss some of the main bipartition problems, in our model for privacy preserving the labeling problems are more relevant.

\subsection{Graph Theory Problems}
The definition of the problems is given here briefly, a detailed definition can be found in Appendix \ref{app:problem-def}. \par
A function $\varphi :V \rightarrow [\alpha]$ is a legal \textit{$\alpha$-Coloring} of graph $G=(V,E)$ if and only if, for each $\{v,u\}\in E \rightarrow \varphi(v) \neq \varphi(u)$. Similarly, a function $\varphi: E \rightarrow [\alpha]$ is a valid \textit{$\alpha$-edge coloring} of graph $G=(V,E)$ i.f.f. for any vertex $v \in V$ there are no two distinct vertices $u,w \in \Gamma(v)$ such that $\varphi((v,u)) = \varphi((v,w))$. While a deterministic construction of such $(\Delta+1)$-graph coloring requires at least $O(\log^*{n})$ rounds \cite{linial1987distributive}. A randomized $(\Delta+1)$-graph coloring can be done in $\mathrm{poly}(\log\log{n})$ rounds \cite{rozhovn2020polylogarithmic}. Graph coloring is of special interest due to its applications in many resource management algorithms. In particular, certain resource allocation tasks require a proper coloring (possibly of a power graph) and that each vertex knows its own color, but not the colors of its neighbors. For example, this is the case in certain variants of Time Division Multiple Access schemes. \par
A \textit{forest} is a graph which contains no cycles. A \textit{forest decomposition} of graph $G=(V,E)$ is an edge-disjoint partition of $G$, to $\alpha$ sub-graphs $\mathcal{F}_1,\mathcal{F}_2,...,\mathcal{F}_\alpha$ such that $\mathcal{F}_i$ is a forest 
for every $1 \leq i \leq \alpha$. One way to define the \textit{arboricity} of a graph $G$ is as the minimal number of forests which are enough to fully cover $G$. \par
Given a graph $G=(V,E)$ and a vertex-disjoint partition of graph $G=(V,E)$ to $\alpha$ clusters $\mathcal{C}_1,\mathcal{C}_2,...,\mathcal{C}_\alpha$, we define an auxiliary graph $\mathcal{G}=(\mathcal{V},\mathcal{E})$ such that $\mathcal{V} = \{\mathcal{C}_1,\mathcal{C}_2,...,\mathcal{C}_\alpha\}$ and $(\mathcal{C}_u,\mathcal{C}_v)\in \mathcal{E}$ ($\mathcal{C}_u,\mathcal{C}_v\in \mathcal{V}$) iff $\exists (u,v) \in E$ such that $u\in\mathcal{C}_u$ and $v\in\mathcal{C}_v$.
The partition $\mathcal{C}_1,\mathcal{C}_2,...,\mathcal{C}_\alpha$ is a valid \textit{$(\mathrm{d},\mathrm{c})$-network decomposition} \cite{awerbuch1989network} if (1) the chromatic number of $\mathcal{G}$ is at most $\mathrm{c}$ and (2) the distance between each pair of vertices contained in the same cluster $v,u\in \mathcal{C}_i$ is at most $\mathrm{d}$. In strong network decomposition, the distance is measured with respect to the cluster $\mathcal{C}_i$ (in other words, $dist_{\mathcal{C}_i}(v,u) \leq \mathrm{d}$). In weak network decomposition, the distance is measured with respect to the original graph $G$ (in other words, $dist_G(v,u) \leq \mathrm{d}$).
Different algorithms yield different kind of network decompositions which satisfy different values of $\mathrm{d}$ and $\mathrm{c}$. One of the most valuable decompositions, which presents good trade-off between the radius of each cluster and the chromatic number of the auxiliary graph is an $(O(\log{n}),O(\log{n})$-network decomposition. \par
A set $I \subseteq V$ of vertices is called an \textit{Independent Set} (\textit{IS}) if and only if for each pair of vertices $v,u \in I$ there is no edge $(v,u) \in E$. An independent set $I$ is \textit{Maximal Independent Set} (\textit{MIS}) i.f.f. there is no vertex $v \in V \setminus I$ such that $I \cup \{v\}$ is a valid independent set. Similarly, a set of edges $M \in E$ is called a \textit{Matching} i.f.f. there is no pair of vertices $u_1,u_2 \in V$ ($u_1 \neq u_2$) such that $\exists{v\in V}$ where $\{(u_1,v),(u_2,v)\} \subseteq M$. A matching $M$ is \textit{Maximal Matching} (\textit{MM}) i.f.f. there is no edge $e \in E \setminus M$ such that $M \cup \{e\}$ is a valid matching. \par

\subsection{Secure Multi-Party Computation}
\label{chap:bgmultiparty}
\textit{Multi-Party Computation} (\textit{MPC}) is the ability of a party consisting of $n$ participants to compute a certain function $f(x_1,x_2,...,x_n)$ where each participant $i (1 \leq i \leq n)$ holds only its own input $x_i$. At the research fields of cryptography and networks security, \textit{Secure MPC} \cite{yao1982protocols} protocols enables parties to compute a certain function $f$ without revealing their own input ($x_i$). Our security model is information-theoretic secure, which means it is not based on any computational assumptions. Furthermore, we base our security notion on the \textit{semi-honest model} (as was devised by \cite{goldreich1987play}) which means that there are possibly curious participants but no \textit{malicious adversary}. In other words, adversary participant can not deviate from the prescripted protocol. However, it may be curious, meaning it may run an additional computation in order to find out private data of another participants. Permitting the existence of malicious adversaries which may collude with $t$ nodes will necessitate the graph to be $(2t+1)$-connected for security to hold (as shown by \cite{parter2017distributed}), which may be not a feasible constraint. \par
Previous works on secure-MPC (\cite{yao1982protocols}, \cite{goldreich1987play}) do not state any assumptions on the nature of neither the function $f$ nor the interactions between the participants. As a consequence, the privacy preserving protocols devised during the past decades are generalized for any kind of mathematical function and not necessarily computation of graph features. Furthermore, each of the participants is assumed to be an equal part of the computation. As such, any pair of participants is assumed to have a private communication line of its own. Translating those protocols to distributed algorithms for graph theory problems, will require a complete graph representing the communication which may be different than the input graph of the problem. While this approach is applicable in many realistic networks and problems, general networks with non-uniform communication topology may benefit from efficient distributed algorithms for computations where the desired function $f$ is local. Other works (such as \cite{halevi2016secure} and \cite{halevi2017non}) are dedicated to general graphs. However, their goal was not to optimize the rounds complexity as the protocols created by their algorithms will require at least $O(n^2)$ rounds even for a relatively simple function $f$. Furthermore, their techniques require a heavy setup phase, and based on some computational assumptions. Several other works provide secure protocols for general or sparse graphs (\cite{boyle2013communication} \cite{garay2008almost} \cite{chandran2010improved}). However, the focus in those protocols was to work for every function $f$, at the expense of increasing the round complexity, or the necessity of several computational assumptions. \par
Recently, Parter and Yogev \cite{parter2017distributed} \cite{parter2019secure} suggested a new kind of privacy notion which they tailored to the $\mathcal{CONGEST}$ distributed model. This notion allows a perfect security, in the information-theoretic sense, against a semi-honest adversary. In their notion, the neighbors of each node $v$ construct a \textit{private neighborhood tree} throughout which they broadcast a shared randomness. This randomness is used in order to encrypt the private variable of each neighbor. The node receives these encrypted private variables $x_1,x_2,...,x_t$ $(t=|\Gamma(v)|)$ and performs its local computation $f(x_1,x_2,...,x_t)$. Let $\mathrm{OPT}(G)$ be the best depth possible
for private neighborhood trees. Parter and Yogev devised an algorithm which constructs such trees in $O(n+\Delta\cdot\mathrm{OPT}(G))$ rounds, where each tree has depth of $O(\mathrm{OPT}(G)\cdot\mathrm{polylog}(n))$ and each edge $e\in E$ is part of at most $O(\mathrm{OPT}(G)\cdot\mathrm{polylog}(n))$ trees. Using their notion one can turn any $r$-rounds algorithm into a secure algorithm with an overhead of $\mathrm{poly}(\Delta,\log{n})\cdot\mathrm{OPT}(G)$ rounds for each round. Furthermore, they showed that for a specific family of distributed algorithms (to which they referred as "simple"), the round overhead can be reduced to $\mathrm{OPT}(G)\cdot\mathrm{polylog}(n)$. Using their method they have devised a variety of both global and symmetry-breaking local algorithms. However, their notion requires a pre-construction phase. Parter and Yogev obtain this phase within nearly-optimal running time. Still, in the security notion of perfect security against a semi-honest adversary, a pre-construction phase is unavoidable. Another unavoidable requirement is that the graph is bi-connected. On the other hand, our security notion is different, and deals with output security. Consequently, there is no need for a pre-construction phase, nor for a bi-connected graph. \par

\section{Inherently Secure Distributed Algorithms}
\label{chap:inheren}
Most of the classic distributed algorithms models assume that each vertex is aware of its neighbors. This is the case in our paper as well. Usually each node does not have any additional input except for its own ID. The output of the algorithm is a set of labels where each label corresponds to each vertex. Throughout the current work, vertices IDs will not be considered as a private input. Formally, from the perspective of node $v \in V$, a classic distributed algorithm calculates a function $f_v(D_{\Gamma(v)}) = l_v$, where $l_v\in [l]$ (for some constant $l$) is vertex $v$'s label which was calculated based on the input messages ($D_{\Gamma(v)}$) came from vertex $v$'s neighbors ($\Gamma(v)$). From a global perspective, the algorithm computes: $f:G(V,E) \rightarrow [l]^n$. This work considers the following security notion: each vertex $v$ cannot infer the value of $l_u$ (such that $u \in V, u \neq v$) with a certain probability. Our model assumes that each node $v \in V$ holds a private randomness generator $r_v$. \par
As an example, consider an algorithm for $\Delta^2$ graph multicoloring of graph $G=(V,E)$ with maximum degree $\Delta=\Delta(G)$ which provides any vertex $v$ with a set of $\Delta$ \textit{valid} colors $\varphi(v)=\{x_1,x_2,...,x_\Delta\}$. By "valid" we mean that any of the colors in $\varphi(v)$ is not contained in any of $v$'s neighbors' sets, i.e. $x_i\notin \bigcup\limits_{u\in \Gamma(v)}{\varphi(u)}$ (for any $1\leq i \leq \Delta$). Using this kind of coloring, $v$ can \textit{privately} select a random color out of the $\Delta$ valid colors in $\varphi(v)$. Hence, the identity of the exact color of $v$ can be securely hidden from any of the other vertices in $G$. \par
We generalize the above idea as follows. Consider the following family of algorithms. Each algorithm $\Pi$ in the $\textit{Inherently-Secure}$ algorithms family $\mathcal{IS}$ consists of two stages: (1) Calculating a generic set of $k$ possible valid labels. (2) Randomly and privately (using the private randomness generator $r_v$), each node selects its final label. That is, the first stage of algorithm $\Pi$ (denoted by $\Pi_{generic}$) calculates the function: $f_1(G(V,E)) = \{\ell_{u_1},...,\ell_{u_n}\}$, where $\ell_{u_i} = \{l_i^1,l_i^2,...,l_i^k\}$ for any $u_i \in V$. Henceforth, $\Pi_{generic}$ will be referred as \textit{generic-algorithm}. The first stage can run without any additional security considerations, meaning any node may know the $\ell_{u_i}$ of other nodes. Later, we will show algorithms which satisfies even stronger security notion where the identity of $\ell_{u_i}$ is also kept secret. The second stage ($\Pi_{select}$) securely calculates the function $f_2:{[\ell]^k}^n \rightarrow [l]^n$. Overall, algorithm $\Pi$ indeed calculates $f=f_1 \circ f_2 : G(V,E) \rightarrow \{l_1,...,l_n\}$. Let $L$ be the ground set of valid labels from which the possible labels are being picked, i.e. for any $1 \leq i \leq n$ and $1 \leq j \leq k$, $l_i\in L$ and $l_i^j\in L$ .\par
By increasing the amount of possible values ($k$) we make the actual labels $\{l_1,l_2,...,l_n\}$ less predictable. However, in order to do so we may need to increase the ground set of the available labels. For instance, in graph coloring we may want to be able to produce $\Delta$ valid possible colors for each vertex. However, an increase of the amount of colors (to $\Delta^2$) may be necessary. On the other hand, one may want to minimize the size of the ground set since large ground sets may lead to trivial algorithms on one hand, and to a higher memory complexity on the other hand. \par
In order to analyze this kind of algorithms we define several parameters.
\begin{definition}
The size of the problem domain of a problem $\mathcal{P}$ solved by algorithm $\Pi$ which calculates the function $f : G(V,E) \rightarrow \{l_1,l_2,..,l_n\}$, where $l_i \in L$, is the number of valid labels for any $l_i$, i.e. $|L|$.
\label{def:problem-domain}
\end{definition}
\begin{definition}
The size of the solution domain of a generic algorithm $\Pi_{generic}$ which calculates the function $f_1:G(V,E) \rightarrow \{\{l_1^1,l_1^2,...,l_1^k\},...\{l_n^1,l_n^2,...,l_n^k\}\}$ is the minimal number of valid possible labels for any vertex, i.e. $k$.
\label{def:solution-domain}
\end{definition}
\begin{definition}
The contingency factor of generic algorithm $\Pi_{generic}$ used to solve problem $\mathcal{P}$ (as they defined on definitions \ref{def:problem-domain} and \ref{def:solution-domain}) is the ratio between the size of the problem domain $|L|$ (Def. \ref{def:problem-domain}) and the size of the solution domain $k$ (Def. \ref{def:solution-domain}) , i.e. $|L|/k$.
\label{def:contingency-factor}
\end{definition}
In order to clarify these definitions, consider the problem of $\Delta^2$-graph coloring. The size of the problem domain is $\Delta^2$. A generic algorithm that calculates $\Delta$ possible valid colors for each vertex will provide a solution domain of size $\Delta$. The contingency factor of this algorithm will be $\frac{\Delta^2}{\Delta}=\Delta$. \par
In many cases, the number of labels (i.e. the size of the problem domain) can be increased artificially by a factor $c>1$. This artificial increase will lead to an expansion of the problem domain by the same factor $c$. As a result, the contingency factor will remain the same. That is, the contingency factor is a property of the algorithm itself and not influenced by artificial increases. Small contingency factor indicates that most of the members of the problem domain are valid options on the solution domain, while the generic algorithm did not exclude those members from being considered as valid possible solutions. As such, small contingency factor indicates that the algorithm preserves better security by excluding only a small portion of possible solutions. Problems with small problem-domain will have even smaller solution domain which will lead to a contingency factor that is close to the original size of the problem domain. Therefore, finding a generic algorithm with good contingency factor for these problems is a complicated task. As a consequence, we will focus on finding generic algorithms for problems with relatively large problem-domain, i.e. labeling problems. For problems with small problem domain, other techniques (such as Parter and Yogev's compiler \cite{parter2017distributed}) should be considered.  \par
Note that even though a malicious node may interrupt the validity of the algorithm by picking a solution which is not part of its solution domain, this kind of intrusion will not affect the privacy of the algorithm. However, in our model the nodes are not malicious.

\subsection{Generic Algorithms for Graph-Coloring}
\label{section:generic_graph_coloring}
Considering the problem of graph coloring, we will focus on finding generic algorithms that will provide contingency factor of $\Delta$. This factor is optimal for general graphs, as we prove in Theorem \ref{theorem:optimal_facotr_coloring}.

\begin{theorem}
For any $\alpha$-coloring problem ($\alpha > \Delta$), and any generic algorithm $\Pi$, there is an infinite family of graphs such that their solution domain must be of size $O(\alpha/\Delta)$ at most. Hence, the contingency factor would be at least $\Omega(\Delta)$.
\label{theorem:optimal_facotr_coloring}
\end{theorem}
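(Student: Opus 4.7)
The plan is to exhibit the infinite family as disjoint unions of the complete graph $K_{\Delta+1}$, where the pairwise-disjointness constraint on the solution sets becomes unavoidable. First I would unpack what it means to be a generic algorithm in the sense of Definition \ref{def:solution-domain}: each vertex $v$ outputs a set $\ell_v \subseteq L$ with $|\ell_v| \geq k$, and any simultaneous selection $l_v \in \ell_v$ across all vertices must form a proper $\alpha$-coloring. For graph coloring, this forces $\ell_u \cap \ell_v = \emptyset$ for every edge $\{u,v\}\in E$, since otherwise both endpoints could select the same color and violate properness.

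Next I would analyze the algorithm on $G = K_{\Delta+1}$, which has maximum degree $\Delta$ and requires at most $\alpha$ colors since $\alpha > \Delta$. Because every pair of vertices in $K_{\Delta+1}$ is adjacent, the sets $\ell_{v_1}, \ell_{v_2}, \ldots, \ell_{v_{\Delta+1}}$ are pairwise disjoint subsets of the ground set $L$ of size $\alpha$. Counting,
\begin{equation*}
(\Delta+1)\cdot k \;\leq\; \sum_{i=1}^{\Delta+1} |\ell_{v_i}| \;\leq\; |L| \;=\; \alpha,
\end{equation*}
so the solution-domain size must satisfy $k \leq \alpha/(\Delta+1) = O(\alpha/\Delta)$, as claimed.

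To obtain an infinite family rather than a single graph, I would take $G_m$ to be the disjoint union of $m$ copies of $K_{\Delta+1}$, for $m = 1, 2, 3, \ldots$. Each $G_m$ has maximum degree $\Delta$, still contains a $K_{\Delta+1}$ on which the above counting argument applies unchanged, and the graphs are of unbounded size. Finally, by Definition \ref{def:contingency-factor} the contingency factor is $|L|/k \geq \alpha / (\alpha/(\Delta+1)) = \Delta+1 = \Omega(\Delta)$, completing the proof.

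There is no substantial obstacle here; the only subtlety worth flagging is that the disjointness property $\ell_u \cap \ell_v = \emptyset$ for adjacent $u,v$ is not part of Definition \ref{def:solution-domain} itself, but rather a direct consequence of the validity requirement that \emph{every} joint selection from the $\ell_v$'s produces a proper coloring. Making this implication explicit at the start of the proof is what converts the counting step into a clean bound.
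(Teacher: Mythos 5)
Your proposal is correct and follows essentially the same route as the paper: both arguments restrict attention to a clique on $\Delta+1$ vertices, observe that the solution sets of mutually adjacent vertices must be pairwise disjoint, and count to conclude $k \leq \alpha/(\Delta+1)$. The only differences are presentational --- you argue directly rather than by contradiction, and you are more explicit about constructing the infinite family (disjoint unions of $K_{\Delta+1}$) and about deriving disjointness from the validity requirement, both of which the paper leaves implicit.
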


\begin{proof} Suppose for contradiction that there is a valid solution domain such that every vertex has more than $\alpha/\Delta$ valid options. Consider a graph $G=(V,E)$ with clique $C \subseteq V$ of size $|C| = \Delta+1$. Each vertex $v \in C$ has $\Delta$ neighbors, each of them has $\alpha/\Delta$ valid colors. But since $v$ and all its neighbors are part of the clique, each of them has a unique set of colors. It means that there are at least $(\Delta + 1)\cdot(\alpha/\Delta) > \alpha$ colors in the $\alpha$-coloring, which is a contradiction. 
\end{proof}

\subsubsection{Generic Algorithm for $3\Delta$-Coloring of Oriented Trees}
\label{section:secure_cole_vishkin}

A well known algorithm for 3-Coloring of oriented trees was devised by Cole and Vishkin \cite{cole1986deterministic}. The deterministic algorithm exploits the asymmetric relationship between a vertex $v$ and its parent (in the tree) $\pi(v)$ in order to get a valid coloring in $O(\log^*{n})$ rounds. \par

For any two integers $a$ and $b$, let $<a,b>$ be a tuple which can be represented in binary as the concatenation of the binary representations of $a$ and $b$. We can define a generic algorithm that runs Cole Vishkin's algorithm to get a valid coloring $\varphi : V \rightarrow [3]$. Later, each vertex will set its solution domain $\hat{\varphi}(v)$ as follows: $\hat{\varphi}(v) = \bigcup\limits_{0\leq i < \Delta}{< i, \varphi(v)>}$. As a result, we get a generic algorithm where each vertex has $\Delta$ valid colors. The validity of this algorithm is provided by the following Lemma:\par

\begin{lemma}
\label{lemma:valid_coloring_cv_expansion}
For any vertex $v\in V$, for every value $x \in \hat{\varphi}(v)$, $x$ is not a possible color for any other vertex $u \in \Gamma(v)$. 
\end{lemma}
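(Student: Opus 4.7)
The plan is to reduce the lemma to the correctness of the underlying Cole--Vishkin $3$-coloring, with the only extra work being to verify that the tuple-encoding $\langle i, \varphi(v)\rangle$ is injective in its second coordinate, so that the color $\varphi(v)$ can be uniquely read off from any label in $\hat{\varphi}(v)$.

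First I would fix a convention for the tuple $\langle a,b \rangle$: since $\varphi(v) \in \{0,1,2\}$, its binary representation fits in exactly two bits, so I will assume $\langle a,b \rangle$ denotes the concatenation in which the second field occupies a constant width (two bits). With this convention, from the low-order bits of any $x \in \hat{\varphi}(v)$ one recovers $\varphi(v)$, and from the remaining high-order bits one recovers the index $i \in \{0,\dots,\Delta-1\}$. In particular, the map $(i,\varphi(v)) \mapsto \langle i, \varphi(v)\rangle$ is injective, so two tuples $\langle i, c\rangle$ and $\langle j, c'\rangle$ coincide iff $i=j$ and $c=c'$.

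Next, let $u \in \Gamma(v)$ be arbitrary and let $x \in \hat{\varphi}(v)$. By construction $x = \langle i, \varphi(v)\rangle$ for some $i$, while every element of $\hat{\varphi}(u)$ has the form $\langle j, \varphi(u)\rangle$ for some $j$. By correctness of Cole--Vishkin's algorithm on the oriented tree, $\varphi$ is a proper $3$-coloring, and hence $\varphi(u) \neq \varphi(v)$. By the injectivity observation above, $\langle i, \varphi(v)\rangle \neq \langle j, \varphi(u)\rangle$ for every $j$, so $x \notin \hat{\varphi}(u)$. Since $u$ and $x$ were arbitrary, no label in $\hat{\varphi}(v)$ appears as a valid color for any neighbor of $v$.

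I do not expect any real obstacle; the only subtlety is pinning down the encoding so that concatenation is injective in the second argument. Once the fixed-width convention is stated, the argument is immediate from the $3$-coloring property. A minor sanity check I would include is that the resulting label alphabet has size $3\Delta$ (matching the statement of the $3\Delta$-coloring), and that $|\hat{\varphi}(v)| = \Delta$, which yields the claimed solution-domain size.
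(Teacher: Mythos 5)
Your proof is correct and takes essentially the same route as the paper's: both reduce the claim to the properness of the Cole--Vishkin $3$-coloring via the observation that $\langle i,\varphi(v)\rangle = \langle j,\varphi(u)\rangle$ forces $\varphi(v)=\varphi(u)$ (the paper argues by contradiction, you argue the contrapositive directly). Your explicit fixed-width convention making the tuple encoding injective is a detail the paper's proof leaves implicit, and is a worthwhile addition.
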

\begin{proof} Suppose for contradiction that there exists a vertex $u \in \Gamma(v)$ such that $x \in \hat{\varphi}(u)$. Since, $\hat{\varphi}(v) = \bigcup\limits_{0\leq i < \Delta}{< i, \varphi(v)>}$, there exists a value $1 \leq i \leq \Delta$ such that $x = <i, \varphi(v)> = <i, \varphi(u)>$. Hence, $\varphi(v) = \varphi(u)$, which is a contradiction since $\varphi$ is a valid coloring as was proved by \cite{cole1986deterministic}. \end{proof}

Lemma \ref{lemma:valid_coloring_cv_expansion} leads to the following corollary: 
\begin{corollary}
Given a tree $T=(V,E,v)$, there exists a generic algorithm which provides any vertex $v\in V$ with a set of $\Delta$ possible valid colors
\end{corollary}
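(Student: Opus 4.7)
The plan is to exhibit the generic algorithm already sketched in the paragraph preceding Lemma \ref{lemma:valid_coloring_cv_expansion} and verify its two required properties. First I would run the Cole--Vishkin deterministic algorithm \cite{cole1986deterministic} on the oriented tree $T$ to obtain, in $O(\log^* n)$ rounds, a proper 3-coloring $\varphi : V \rightarrow [3]$. Then, in a single local step requiring no further communication, each vertex $v$ forms its solution domain
\[
\hat{\varphi}(v) \;=\; \bigcup_{0 \leq i < \Delta} \{\langle i, \varphi(v)\rangle\}.
\]

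Two properties need to be checked. First, $|\hat{\varphi}(v)| = \Delta$: as $i$ ranges over the $\Delta$ distinct values $\{0, 1, \ldots, \Delta - 1\}$ while the second coordinate $\varphi(v)$ is held fixed, the tuples $\langle i, \varphi(v)\rangle$ are pairwise distinct, so the union contains exactly $\Delta$ labels. Second, every $x \in \hat{\varphi}(v)$ must fail to be a possible color for every neighbor $u \in \Gamma(v)$; this is precisely the content of Lemma \ref{lemma:valid_coloring_cv_expansion}, so the verification reduces to a direct invocation of that lemma.

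I do not expect a real obstacle here, since the corollary is essentially a repackaging of the construction together with Lemma \ref{lemma:valid_coloring_cv_expansion}. The only point that warrants care is ensuring the tuple encoding $\langle \cdot , \cdot \rangle$ is unambiguous, so that equality of full tuples between a vertex and a neighbor actually forces equality of the Cole--Vishkin colors (which is what Lemma \ref{lemma:valid_coloring_cv_expansion} relies on). This is handled by the standard convention of representing $\langle a, b\rangle$ as the concatenation of the fixed-length binary encodings of $a$ and $b$, making the map injective and completing the argument.
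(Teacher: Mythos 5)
Your proposal matches the paper's own argument: the corollary is derived directly from the construction $\hat{\varphi}(v)=\bigcup_{0\leq i<\Delta}\{\langle i,\varphi(v)\rangle\}$ applied to the Cole--Vishkin $3$-coloring, with validity supplied by Lemma~\ref{lemma:valid_coloring_cv_expansion}. Your added checks on the cardinality of $\hat{\varphi}(v)$ and the injectivity of the tuple encoding are correct and consistent with what the paper implicitly assumes.
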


The generic algorithm described above uses a simple approach which achieves privacy by artificially increasing the size of both the problem domain and the solution domain accordingly. 
Another technique is to run an algorithm $d$ times in parallel. For coloring problems a good $d$ will probably be $\Delta$ (as was shown in Theorem \ref{theorem:optimal_facotr_coloring}). While this approach will lead in deterministic algorithms to the same results as the previous technique, applying this technique with random algorithms will lead to solution domain which is somewhat less predictable than the domain we will receive by artificially increasing the size of the solution domain. \par

While these approaches (artificially increasing the size of the solution domain and run the algorithm multiple times) are useful for problems with very efficient base algorithms (i.e. Cole Vishkin 3-coloring), for many problems such an efficient algorithm is not yet known. However, one still may devise efficient generic-algorithms for some of these problems, as demonstrated in the following sections.

\subsubsection{Generic Algorithm for $2\Delta c \cdot \log{n}$-Coloring of General Graphs}
\label{section:secure_coloring}
While the best known algorithms for $(\Delta+1)$-Coloring of generic graphs uses logarithmic number of rounds \cite{johansson1999simple} \cite{luby1993removing}, a reasonable size of contingency factor may yield more efficient algorithms with sub-logarithmic and even constant number of rounds. As an example, consider Algorithm \ref{algo:generic_coloring} which uses $O(1)$ rounds to achieve a secure coloring of general graphs with contingency factor of $O(\Delta)$. \par

\begin{algorithm}[H]\label{algo:generic_coloring}
\caption{GENERIC-RANDOM-COLORING}
\SetAlgoLined
\KwResult{A set of $O(\log{n})$ colors for each vertex $v \in V$ }
 Every vertex selects independently at random $k=c \cdot \log{n}$ different numbers ($c$ is a constant, $c>1$) $I=\{<1,x_1>,<2,x_2>,...,<k,x_k>\}$ where $ x_i \in [2 \Delta]$ is a number selected uniformally at random (for each $1 \leq i \leq k$).\\
 Send $I$ to each neighbor.\\
 For each message $\hat{I}=\{<1,\hat{x_1}>,<2,\hat{x_2}>,...,<k,\hat{x_k}>\}$ received, \textbf{do} $I \leftarrow I \setminus \hat{I}$. \\
\end{algorithm}

\vspace{5pt}
\begin{lemma} 
\label{lemma:generic_coloring_colors_left}  
For any set $I$ produced by algorithm \ref{algo:generic_coloring},  $|I| \geq k/2$ with high probability. 
\end{lemma}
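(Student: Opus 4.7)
The plan is to write $|I|$ as a sum of $k$ independent Bernoulli indicators, show that each is $1$ with probability bounded \emph{strictly} away from $1/2$ by a positive constant, and then invoke a Chernoff lower-tail bound.

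First I would fix a position $i \in [k]$ and let $X_i$ be the indicator that $\langle i, x_i \rangle$ survives the deletions in line~3 of the algorithm. Because the first coordinate is part of the tuple, $\langle i, x_i\rangle$ can only collide with a neighbor's tuple of the same first coordinate, so $\langle i, x_i\rangle$ is deleted iff some $u \in \Gamma(v)$ chose $\hat{x}_i^{(u)} = x_i$. Since each $\hat{x}_i^{(u)}$ is drawn uniformly from $[2\Delta]$ and independently of $x_i$, the exact survival probability is $(1 - 1/(2\Delta))^{\deg(v)} \geq (1 - 1/(2\Delta))^\Delta$. This last quantity is monotonically increasing in $\Delta$, equals $9/16$ at $\Delta = 2$, and tends to $e^{-1/2}$ as $\Delta \to \infty$, so $\Pr[X_i = 1] \geq 9/16$ for every $\Delta \geq 2$. (The case $\Delta = 1$, in which $G$ is a matching, can be handled by inspection.)

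Next I would verify that $X_1, \ldots, X_k$ are mutually independent: the event $\{X_i = 1\}$ is determined solely by $x_i$ and $\{\hat{x}_i^{(u)}\}_{u \in \Gamma(v)}$, and these sets of random variables are disjoint across distinct indices $i$ and mutually independent by the sampling procedure. Setting $S = |I| = \sum_{i=1}^{k} X_i$, the expectation of $S$ is at least $9k/16$, so the failure event $\{S < k/2\}$ represents a deviation below the mean by the constant multiplicative factor $1/9$. A standard multiplicative Chernoff lower-tail bound then yields $\Pr[S < k/2] \leq \exp(-\Omega(k)) = n^{-\Omega(c)}$, which is $1 - 1/\mathrm{poly}(n)$ once $c$ is chosen large enough.

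The main subtlety I anticipate lies precisely in obtaining the strict $9/16$ lower bound rather than the naive $\Pr[X_i = 1] \geq 1/2$ one gets from a union bound over the at most $\Delta$ neighbors: a sum of Bernoulli$(1/2)$'s cannot be shown to meet its own mean with high probability, so the Chernoff step genuinely relies on the slack guaranteed by Bernoulli's inequality for $\Delta \geq 2$.
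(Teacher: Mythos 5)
Your proof is correct, and although it shares the paper's overall skeleton (a per-color survival probability followed by a Chernoff lower-tail bound over the $k$ independent indicators), the key step is genuinely different in a way that matters. The paper bounds the deletion probability of each tuple $\langle i, x_i\rangle$ by a union bound over the $\Delta$ neighbors, obtaining exactly $1/2$, hence $E[|I|] \geq k/2$; it then applies Chernoff to the event $|I| < (1-\delta)k/2$, which only establishes $|I| \geq (1-\delta)k/2$ with high probability rather than the stated threshold $k/2$ (a sum of indicators cannot be shown to sit at or above its own mean w.h.p.). You instead compute the exact survival probability $(1 - 1/(2\Delta))^{\deg(v)} \geq (1-1/(2\Delta))^{\Delta} \geq 9/16$ for $\Delta \geq 2$, which places the mean a constant factor above $k/2$ and lets the Chernoff step legitimately reach the threshold $k/2$. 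So your sharper per-color bound, together with the explicit verification that the indicators are mutually independent (which the paper uses implicitly), actually repairs a small gap in the published argument. One caveat: for $\Delta = 1$ the survival probability is exactly $1/2$, $|I|$ is Binomial$(k,1/2)$, and the event $|I| \geq k/2$ holds only with probability about $1/2$, so the lemma as stated fails there; your ``by inspection'' remark should acknowledge that this degenerate case must be excluded (or handled by a trivial $2$-coloring) rather than proved, though the paper's proof silently suffers from the same issue.
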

\begin{proof}
For any color $x=<i,x_i>$ selected by $v \in V$, the probability that any of the neighbors picked any color which was selected by $v$ is (by union bound) $Pr\left[\exists x \in \bigcup \limits_{u\in \Gamma (v)}{\hat{I_u}} \right] \leq \Delta \cdot \frac{O(\log{n})}{2\Delta} \cdot \frac{1}{O(\log{n})}=\frac{1}{2}$. The probability that none of the neighbors picked any of $v$'s colors is therefore \\ 
$Pr \left[ \not \exists x
\in \bigcup \limits_{u\in \Gamma (v)}{\hat{I_u}} \right] = 1 -
Pr\left[x \in \bigcup \limits_{u\in \Gamma (v)}{\hat{I_u}} \right] \geq 0.5$. 
The expected size of $I$ is therefore $E\left[|I|\right] = k \cdot \left(1 - Pr\left[x \in \bigcup \limits_{u\in \Gamma (v)}{\hat{I_u}} \right] \right) \geq \frac{k}{2}$. 
The probability that the actual size of $I$ is lower than the expectation is given by Chernoff: $Pr[|I|<(1-\delta) \cdot k/2]=\left(\frac{e^{-\delta}}{(1-\delta)^{(1-\delta)}}\right)^{(k/2)}$. Since $k=O(\log{n})$, $Pr[|I|<(1-\delta) \cdot k/2] =\left(\frac{e^{-\delta}}{(1-\delta)^{(1-\delta)}}\right)^{(O(\log{n})/2)} \leq \frac{1}{n^c}$ (for a constant $c>1$). 
\end{proof}

\begin{lemma}
\label{lemma:generic_coloring_unique}
For any set $I_v$ produced by Algorithm \ref{algo:generic_coloring} running on vertex $v \in V$, and every color $i \in I_v$: $\forall{u \in \Gamma(v)},\hspace{3pt} i \notin I_u$.
\end{lemma}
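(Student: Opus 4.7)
\smallskip
\noindent\textbf{Proof plan.} The plan is to trace what survives the deletion step of Algorithm~\ref{algo:generic_coloring} and exploit the symmetry of the exchange. First I would fix notation: let $J_w$ denote the original set of $k$ tuples that vertex $w$ selects in Line~1, and let $I_w$ denote the set that remains after Line~3 is executed. By construction, $I_w \subseteq J_w$ for every vertex $w$, since Line~3 only removes elements and never adds new ones.

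Next I would use the fact that the exchange in Line~2 is symmetric: vertex $v$ sends $J_v$ to each neighbor $u$, and vertex $u$ sends $J_u$ to $v$. Thus the ``message'' $\hat I$ that appears in Line~3 at $v$ when processing a neighbor $u \in \Gamma(v)$ is exactly $J_u$ (and not the post-processed $I_u$, because Line~2 precedes Line~3). Hence the effect of Line~3 at $v$ is precisely $I_v = J_v \setminus \bigcup_{u \in \Gamma(v)} J_u$.

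With this characterization the lemma is immediate: suppose $i \in I_v$ and fix any $u \in \Gamma(v)$. Then $i \notin J_u$, for otherwise $i$ would have been removed from $I_v$ during the processing of the message from $u$. Since $I_u \subseteq J_u$, it follows that $i \notin I_u$, which is exactly what we wish to show. The argument applies uniformly to every neighbor, giving the $\forall u \in \Gamma(v)$ quantifier.

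The only subtle point, and what I expect to be the mildly tricky part to spell out, is justifying that $\hat I$ received from $u$ is the full $J_u$ rather than a post-filtered set; this rests on the ordering of the three lines of the algorithm (select, send, then filter). Beyond that observation, the lemma is really a statement about set difference and requires no probabilistic reasoning at all, in contrast to Lemma~\ref{lemma:generic_coloring_colors_left}.
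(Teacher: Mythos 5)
Your proof is correct and follows essentially the same route as the paper's: both arguments rest on the observation that the sets exchanged in Line~2 are the pre-filtered selections, so a color chosen by both endpoints is deleted from both, and a color surviving at $v$ was therefore never chosen by any neighbor. Your version merely makes the $J_w$ versus $I_w$ distinction and the inclusion $I_u \subseteq J_u$ explicit, which the paper leaves implicit.
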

\begin{proof}
Suppose that both $u$ and $v$ picked $i$ randomly at the first step of the algorithm. Hence, $i$ was in the set of colors $v$ sent to $u$ in the second step and vice-versa. Consequently, both $v$ and $u$ should remove $i$ from their possible colors list, i.e. $i \notin I_v$ and $i \notin I_u$. \end{proof}

The fact that Algorithm \ref{algo:generic_coloring} is privacy preserving is established by the Theorem \ref{theorem:generic_coloring_correctness} derived from Lemmas \ref{lemma:generic_coloring_colors_left} and \ref{lemma:generic_coloring_unique}. The contingency factor of the algorithm is $\frac{2\Delta c \cdot \log{n}}{c\cdot\log{n}/2}=O(\Delta)$.

\begin{theorem}
\label{theorem:generic_coloring_correctness} For any vertex $v \in V$, executing algorithm GENERIC-RANDOM-COLORING, the algorithm produces a set of at least $k/2$ valid colors in $O(1)$ rounds, with high probability. 
\end{theorem}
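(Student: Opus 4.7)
The plan is to assemble the theorem directly from the two preceding lemmas, since together they already cover the two non-trivial properties the statement asks for (size and validity), leaving only the round-count claim as a routine observation.

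First I would note that Algorithm \ref{algo:generic_coloring} consists of a single local sampling step, a single send step, and a single local pruning step. Only one communication round is used, so the $O(1)$ round bound is immediate and does not even depend on the success of the probabilistic arguments.

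Next I would argue validity: for every color $x$ remaining in the final set $I_v$, no neighbor $u \in \Gamma(v)$ holds $x$ in $I_u$. This is exactly the content of Lemma \ref{lemma:generic_coloring_unique}, which I would invoke as a black box. Because the tag $\langle i, x_i \rangle$ couples the position index with the value, the deletion in line~3 is symmetric between $v$ and $u$, so each collision is resolved on both sides simultaneously.

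Then I would argue the size bound: $|I_v| \geq k/2$ with high probability. This is exactly Lemma \ref{lemma:generic_coloring_colors_left}, whose proof combines a per-color union bound over $\Delta$ neighbors (giving collision probability at most $1/2$) with a Chernoff concentration on the $k = c \log n$ independent trials to turn the expected bound $\mathbb{E}[|I_v|] \geq k/2$ into a high-probability bound $1 - n^{-c}$.

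Finally I would finish by a union bound: the validity event holds deterministically once the algorithm terminates, so the only randomness is in the size bound; therefore the conjunction ``$|I_v| \geq k/2$ and $I_v$ is a valid set of colors for $v$'' holds with probability at least $1 - n^{-c}$, which is the high-probability guarantee claimed. I do not expect a real obstacle here: all the quantitative work has already been done inside Lemma \ref{lemma:generic_coloring_colors_left}, and the main thing to be careful about is just stating precisely what ``valid'' means (i.e.\ the selected colors form a legal multicoloring in the sense introduced earlier in the section) so that Lemma \ref{lemma:generic_coloring_unique} applies verbatim.
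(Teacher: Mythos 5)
Your proposal is correct and matches the paper's own treatment: the paper derives the theorem directly from Lemma~\ref{lemma:generic_coloring_colors_left} (the high-probability size bound via union bound and Chernoff) and Lemma~\ref{lemma:generic_coloring_unique} (validity of every remaining color), with the $O(1)$ round count read off from the three-line algorithm. Your only addition is the explicit closing remark that validity is deterministic so no extra union bound is actually consumed, which is a harmless clarification rather than a different route.
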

\subsubsection{Generic Algorithm for $O(\Delta^2)$-Coloring of General Graphs}
\label{section:poly-coloring}
Since currently known deterministic algorithms for $(\Delta + 1)$-coloring require at least $\sqrt{\log n}$ rounds, applying the simultaneous execution described above with such an algorithm will lead to relatively poor round complexity.
Instead, in this section we generalize a construction of \cite{barenboim2013distributed}\cite{linial1987distributive} which provides $O(\Delta^2)$-coloring in $\log^*{n} + O(1)$ rounds, in order to directly (i.e. without simultaneous executions) obtain secure algorithm for $O(\Delta^2)$-coloring. We employ a Lemma due to Erd{\"o}s et al. \cite{erdos1985families}. 

\begin{lemma}
\label{lemma:erdos}
For two integers $n$ and $\Delta$, $n > \Delta \geq 4$, there exists a family $\mathcal{J}$ of n subsets of the set $\{1,...,m\}$, $m=\lceil
\Delta^2 \cdot \ln{n} \rceil$, 
such that if $F_0,F_1,...,F_\Delta \in \mathcal{J}$ then $F_0 \not \subseteq \bigcup\limits_{i=1}^\Delta{F_i}$.
\end{lemma}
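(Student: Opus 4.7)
The natural approach is the probabilistic method. I would construct the family $\mathcal{J} = \{F_1, \ldots, F_n\}$ at random: for each set $F_i$ and each ground element $j \in \{1, \ldots, m\}$, independently include $j$ in $F_i$ with some probability $p$ to be optimized. The claim is then that with positive probability no $(\Delta+1)$-tuple from $\mathcal{J}$ violates the required covering condition.

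Fix an ordered tuple $(F_0, F_1, \ldots, F_\Delta)$ of distinct members and call it \emph{bad} if $F_0 \subseteq \bigcup_{i=1}^{\Delta} F_i$. The bad event is the intersection, over all $j \in \{1,\ldots,m\}$, of the event ``$j \notin F_0$ or $j$ lies in some $F_i$ with $i \geq 1$''. For a single $j$, the complementary event ``$j \in F_0$ and $j \notin F_i$ for every $i \geq 1$'' occurs with probability exactly $p(1-p)^\Delta$, and because inclusions of different elements are independent across sets and across coordinates, I obtain
\[
\Pr\!\left[F_0 \subseteq \bigcup_{i=1}^{\Delta} F_i\right] = \bigl(1 - p(1-p)^\Delta\bigr)^m \leq \exp\!\bigl(-m \cdot p(1-p)^\Delta\bigr).
\]
Optimizing over $p$ yields the maximizer $p = 1/(\Delta+1)$, which gives $p(1-p)^\Delta = \frac{1}{\Delta+1}\bigl(\frac{\Delta}{\Delta+1}\bigr)^\Delta \geq \frac{c}{\Delta}$ for an absolute constant $c > 0$ whenever $\Delta \geq 4$. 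Substituting $m = \lceil \Delta^2 \ln n \rceil$ makes the per-tuple failure probability at most $n^{-c\Delta}$.

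Finally I would apply the union bound over the at most $n^{\Delta+1}$ ordered $(\Delta+1)$-tuples of members of $\mathcal{J}$. The expected number of bad tuples is then at most $n^{\Delta+1} \cdot n^{-c\Delta}$, and this is strictly less than $1$ once the implicit multiplicative constant inside the ceiling $\lceil \Delta^2 \ln n \rceil$ is taken large enough. Consequently some outcome of the random construction produces a family $\mathcal{J}$ with no bad tuple, which is exactly the family demanded by the lemma.

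The main obstacle is the quantitative bookkeeping: the optimal $p$ only gives an $\Omega(1/\Delta)$ covering probability, so the exponent $c\Delta \ln n$ from the containment bound must dominate the $(\Delta+1)\ln n$ from the union bound. I would either (i) absorb the gap into the constant hidden inside $\lceil \Delta^2 \ln n \rceil$, which is harmless for the downstream $O(\Delta^2)$-coloring application, or (ii) sharpen the union bound by counting unordered subfamilies via $\binom{n}{\Delta+1} \leq n^{\Delta+1}/(\Delta+1)!$ and applying Stirling. The hypothesis $n > \Delta \geq 4$ is used both to keep the tuple count meaningful and to ensure $\bigl(\Delta/(\Delta+1)\bigr)^\Delta$ is bounded below by a concrete fraction of $1/e$, which pins down the constant $c$ above.
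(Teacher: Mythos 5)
First, a point of reference: the paper does not prove this lemma at all --- it imports it verbatim from Erd\H{o}s, Frankl and F\"uredi \cite{erdos1985families} --- so there is no in-paper argument to compare yours against. Your probabilistic-method outline is the standard route to results of this type, and it does correctly prove the statement with $m = O(\Delta^2 \ln n)$, which is all the paper's downstream $O(\Delta^2)$-coloring construction actually uses. The per-element computation, the independence claim, and the choice $p = 1/(\Delta+1)$ are all fine.

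The gap is the one you half-acknowledge at the end, and neither of your proposed repairs closes it for the lemma \emph{as stated}. With $p = 1/(\Delta+1)$ you get $p(1-p)^\Delta \geq \frac{1}{e(\Delta+1)}$, so the per-tuple failure probability is at most $\exp\!\bigl(-\frac{m}{e(\Delta+1)}\bigr)$; for the union bound over roughly $n^{\Delta+1}$ tuples to succeed you need $m > e(\Delta+1)^2 \ln n$, i.e.\ a ground set larger by a factor of about $e$ than the $\lceil \Delta^2 \ln n\rceil$ in the statement. Your fix (i) fails because the lemma's $m = \lceil \Delta^2 \ln n\rceil$ has \emph{no} hidden multiplicative constant to absorb anything into --- writing $m$ with a ceiling is not the same as writing $O(\Delta^2\ln n)$. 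Your fix (ii) saves a factor of $(\Delta+1)! \approx e^{\Delta\ln\Delta}$ in the union bound, which closes a gap of order $n^{(1-1/e)\Delta}$ only when $\Delta\ln\Delta \gtrsim (1-1/e)\Delta\ln n$, i.e.\ when $\Delta$ is polynomially large in $n$; the lemma must hold for all $\Delta$ with $n > \Delta \geq 4$, including constant $\Delta$, where the factorial saving is negligible. So either state and prove the weaker (and, for this paper, sufficient) bound $m = O(\Delta^2\ln n)$ via your argument, or reproduce the sharper constant from \cite{erdos1985families}, whose proof uses a more careful construction (constant-weight sets with tighter counting) rather than the plain i.i.d.\ model plus union bound.
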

A set system $\mathcal{J}$ which satisfies the above is referred as \textit{$\Delta$-cover-free} set. \par
Erd{\"o}s et al. \cite{erdos1985families} also showed an algebraic construction which satisfies Lemma \ref{lemma:erdos}. For two integers $n$ and $\Delta$, using a ground set of size $m=O(\Delta^2\cdot \log^2{n})$, they construct a family $\mathcal{F}$ of $n$ subsets of the set $\{1,...,m\}$, such that $\mathcal{F}$ is a $\Delta$-cover-free. Linial \cite{linial1987distributive} showed that this construction can be utilized for distributed graph coloring. We construct a slightly different family which also provides multiple uncovered elements in each set:
\begin{theorem}
\label{theorem:algebraic_coloring_construction}
For two integers $n$ and $\Delta$, using a ground set of size $m=O(\Delta^2\cdot \log^2{n})$, there exists a family $\mathcal{F}$ of $n$ subsets of the set $\{1,...,m\}$, such that if $F_0,F_1,...,F_\Delta \in \mathcal{F}$ then $\left| F_0 \setminus \bigcup\limits_{i=1}^\Delta {F_i}\right| \geq \Delta$.
\end{theorem}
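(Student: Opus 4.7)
The plan is to modify the polynomial-evaluation construction of Erd{\"o}s et al.\ (the one Linial used for $O(\Delta^2)$-coloring) by enlarging the field slightly, so that each pairwise intersection absorbs only a tiny fraction of any set, leaving at least $\Delta$ elements uncovered instead of just one.

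First I would fix a prime power $q$ with $q = \Theta(\Delta \log n)$; such a $q$ exists by Bertrand's postulate applied to prime powers. Let $d$ be the least integer with $q^{d+1} \geq n$, so $d = O(\log n / \log(\Delta \log n))$. For each index $a \in \{0,1,\ldots,n-1\}$ pick a distinct polynomial $p_a \in \mathbb{F}_q[X]$ of degree at most $d$ (possible since $q^{d+1} \geq n$), and define
\[
F_a \;=\; \{\, (x,p_a(x)) : x \in \mathbb{F}_q \,\} \;\subseteq\; \mathbb{F}_q \times \mathbb{F}_q .
\]
Take the ground set to be $\mathbb{F}_q \times \mathbb{F}_q$; its size is $q^2 = O(\Delta^2 \log^2 n)$, which matches the claimed bound, and $|F_a| = q$ for every $a$.

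Next I would exploit the standard algebraic fact that two distinct polynomials of degree at most $d$ agree in at most $d$ points. Therefore, for any distinct $F_0, F_i \in \mathcal{F}$,
\[
|F_0 \cap F_i| \;\leq\; d,
\]
and a union bound over the $\Delta$ sets $F_1,\ldots,F_\Delta$ gives $\bigl|F_0 \cap \bigcup_{i=1}^\Delta F_i \bigr| \leq \Delta d$. Consequently
\[
\left| F_0 \setminus \bigcup_{i=1}^\Delta F_i \right| \;\geq\; q - \Delta d .
\]

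Finally I would verify the arithmetic: with $q = c\,\Delta \log n$ for a sufficiently large constant $c$, the quantity $\Delta d$ is of order $\Delta \log n / \log(\Delta \log n)$, i.e.\ a $o(1)$ fraction of $q$. Hence $q - \Delta d \geq q/2 \geq \Delta$ for all sufficiently large $n$, establishing the theorem. (Edge cases of very small $\Delta$ are absorbed into the constant in $O(\Delta^2 \log^2 n)$.)

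The only real obstacle is parameter tuning: the original Erd{\"o}s–Frankl–F{\"u}redi / Linial choice yields only one uncovered element, so the field has to be made a constant factor larger to buy $\Delta$ uncovered elements; I must confirm that this enlargement does not push the ground set past $O(\Delta^2 \log^2 n)$. The arithmetic above shows that a constant inflation of $q$ suffices, and the $q^2$ ground set still lies within the target bound. Everything else (existence of the prime power, enumeration of polynomials, the $d$-agreement bound) is standard.
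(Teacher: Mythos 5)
Your construction is exactly the one the paper uses: take the graphs $\{(x,p(x)) : x \in GF(q)\}$ of low-degree polynomials over $GF(q)$ as subsets of the ground set $GF(q)\times GF(q)$, and invoke the key fact that two distinct polynomials of degree at most $d$ agree in at most $d$ points, so that $\left|F_0\setminus\bigcup_{i=1}^{\Delta}F_i\right|\ge q-\Delta d$. The only difference is parameter tuning --- the paper fixes $d=2$ and takes $q\ge 3\Delta$ (so that $q-2\Delta\ge\Delta$), whereas you take $q=\Theta(\Delta\log n)$ and let $d$ grow to accommodate $n$ polynomials --- and both choices close the arithmetic while keeping $m=q^2=O(\Delta^2\log^2 n)$, so your proof is correct and essentially the same as the paper's.
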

\begin{proof} Let $GF(q)$ be a field of characteristic $q$, where $q$ is a prime. Let $X=GF(q) \times GF(q)$, $|X|=m=q^2$, be a ground-set. For a positive parameter $d$, Poly$(d,q)$ is a set of all polynomials of degree $d$ over $GF(q)$, hence $\left| \mathrm{Poly}(d,q)\right|=q^{d+1}$. For each polynomial $g() \in $ Poly$(d,q)$, let $S_g=\{(a,g(a)) \mid a \in GF(q)\}$ be the set of all the points on the graph of $g()$ ($S_g \subseteq X$). Since $\forall{g()}$, $|S_g|=q$, and since for two distinct polynomials $g()$ and $g'()$, $S_g$ and $S_{g'}$ intersects in at most $d$ points, for a fixed set $S_g$ one needs at least $q/d$ other sets $S_h$ to cover $S_g$. \par
We construct a family $\mathcal{F}=\{S_g \mid g() \in \mathrm{Poly}(d,q)\}$, such that $\Delta = \lceil q/d \rceil -1$ . It follows that for each $S_{g_0},S_{g_1},...,S_{g_\Delta} \in \mathcal{F}$, $S_{g_0} \notin \bigcup \limits_{i=1}^\Delta {S_{g_i}}$. The cardinality of $\mathcal{F}$ is $|\mathcal{F}|=\left|\mathrm{Poly}(d,q)\right|=q^{d+1}=n$. It follows that $m\cdot \log^2{m} = 4q^2\log^2{q} \leq 4(\lceil q/d \rceil)^2(d+1)^2\log^2{q} = 4(\Delta+1)^2\log^2{n} \Rightarrow m \leq 4(\Delta+1)^2\log^2{n}$. When setting $d=2$, we get $n=q^3$, $m=q^2$. Setting $q \geq 3\Delta$, for each $\Delta+1$ different polynomials $F_0,...,F_\Delta$, we get $\left|F_0 \setminus \bigcup\limits_{i=1}^\Delta{F_i}\right| \geq \Delta$. 
\end{proof} 
These polynomials provides sets of labels such that if every vertex is assigned to a set, each set has at least $\Delta$ values which are not contained in any of its neighbors' sets. Figure \ref{fig:polynomials} demonstrates the logic behind the construction of $\Delta$-cover-free polynomials family $\mathcal{F}$. Each pair of second-degree polynomials intersect in at most two elements. Hence, for a ground set of 9 elements (as in figure \ref{fig:polynomials}), each polynomial and a set of 2 other polynomials leave at least $9-2\cdot 2 = 5$ unique elements which not intersect with any of the other polynomials. In figure \ref{fig:polynomials} even though the red polynomial intersects both the blue and the green polynomials in two different elements (each), we still have five different elements which were not intersected by any of the other polynomials. \par

\begin{figure}
    \centering

    \begin{tikzpicture}
  \begin{axis}[ 
    xlabel=$x$,
    xmin=-2,
    xmax=6,
    xtick={-2,-1,0,1,2,3,4,5,6}
  ] 
    \addplot [
    domain=-2:6, 
    samples=9, 
    color=red,
    mark=square
] {(x-2)*(x-3) -4}; 
    \addplot [
    domain=-2:6, 
    samples=9, 
    color=blue,
    mark=square
]{2*(x-2)*(x-4) -4}; 
    \addplot [
    domain=-2:6, 
    samples=9, 
    color=green,
    mark=square
]{-2*x^2+7*x+38}; 
  \end{axis}
\end{tikzpicture}
    \caption{Demonstration of construction of $2$-cover-free sets with at least $5$ uncovered elements.}
    \label{fig:polynomials}
\end{figure}
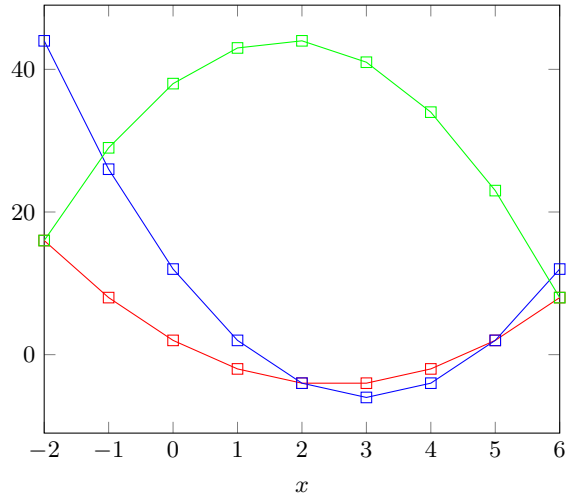 
\par

Next, we will use the constructions from \cite{erdos1985families} and Theorem \ref{theorem:algebraic_coloring_construction} to devise a generic-algorithm for $O(\Delta^2)$-coloring. Our algorithm is similar to Linial's iterative algorithm (\cite{linial1987distributive}), but instead of getting only one color on the last iteration, we get $\Delta$ different possible colors (for each vertex). \par
Starting with a valid $n$-coloring for some graph $G=(V,E)$ (the color of each vertex is its ID), we can apply the coloring algorithm from \cite{linial1987distributive} which will turn the $n$-coloring into an $O(\Delta^2\log^2{n})$-coloring in a single round. After $\log^*{n}+O(1)$ rounds we will get an $O(\Delta^2\log^2{\Delta})$-coloring.
 For a sufficiently large $\Delta$, it holds that $O(\Delta^2\log^2{\Delta}) \leq (3\Delta)^3$. Hence, in order to further reduce the number of the colors to $O(\Delta^3)$ and get $\Delta$ valid optional colors we will use the set system from Theorem \ref{theorem:algebraic_coloring_construction} to reduce the $O(\Delta^2\log^2{\Delta})$-coloring to a $q^2 = (3\Delta)^2$-coloring of $G$ such that each vertex has at least $\Delta$ valid colors.  \par
To conclude, the problem domain is of size $9\Delta^2$. The solution domain contains of at least $\Delta$ valid colors. Consequently, the contingency factor is $O(\Delta)$, which is proved to be optimal (see Theorem \ref{theorem:optimal_facotr_coloring}).

\subsubsection{Generic Algorithm for $p$-Defective $O\left(\left(\frac{\Delta}{p}\right)^2\right)$-Coloring of General Graphs}
\label{section:poly-defective-coloring}
For a graph $G=(V,E)$, the function $\varphi: V \rightarrow [\alpha]$ is a valid $p$-defective $\alpha$-coloring iff for each vertex $v \in V$, the number of neighbors which have the same color as $v$ is at most $p$, i.e. $\left|\{u \in \Gamma(v) \mid \varphi(v)=\varphi(u)\} \right| \leq p$. \par
First, we shall find the lower bound of contingency factors for generic algorithms of defective coloring:
\begin{theorem}
\label{theorem:lower_bound_defective}
For any $p$-defective $\alpha$-coloring problem, there is an infinite family of graphs such that their solution domain must be of size $O\left(\alpha \cdot \frac{p}{\Delta}\right)$ at most and the contingency factor will be at least $\Omega\left(\frac{\Delta}{p}\right)$.
\end{theorem}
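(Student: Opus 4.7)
The plan is to adapt the clique argument from the proof of Theorem \ref{theorem:optimal_facotr_coloring} by loosening the disjointness requirement of solution sets in a way that accounts for the defect parameter $p$. Concretely, I would take the infinite family to consist of graphs that contain a clique $C$ of size $\Delta+1$, and argue about the colors that can appear in the solution sets of vertices inside $C$.

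The key combinatorial observation to establish is the following: for any color $c \in [\alpha]$, at most $p+1$ vertices of $C$ may contain $c$ in their solution set. Indeed, since a generic algorithm must be valid under \emph{any} selection of a label from each vertex's solution set (Definition \ref{def:solution-domain}), an adversarial choice in which every vertex of $C$ whose set contains $c$ selects $c$ must still yield a valid $p$-defective coloring. But if $p+2$ or more vertices of the clique have $c$ in their set and all choose $c$, each such vertex would have at least $p+1$ same-color neighbors inside $C$, violating the defect bound. This step is the main conceptual content; I would state it as an auxiliary claim and verify it from the definition of a valid generic algorithm.

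Once the observation is in place, the rest is a routine double count on pairs $(v,c)$ with $v \in C$ and $c$ in the solution set of $v$. If every vertex's solution set has size at least $k$, the total count is at least $(\Delta+1)\,k$, while the observation bounds it by $\alpha(p+1)$. Combining gives
\[
k \;\leq\; \frac{\alpha(p+1)}{\Delta+1} \;=\; O\!\left(\alpha\cdot \frac{p}{\Delta}\right),
\]
so the contingency factor is at least $\alpha/k = \Omega(\Delta/p)$, as required. To produce an \emph{infinite} family, I would simply attach arbitrarily many such $(\Delta+1)$-cliques (or pad with isolated vertices/additional components preserving maximum degree $\Delta$), so that the argument applies for every sufficiently large $n$.

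The main obstacle I anticipate is not the counting itself but pinning down the adversarial-selection step rigorously: one must be careful that the "everyone picks $c$ simultaneously" scenario is actually permitted by the generic-algorithm model, and that the $p$-defect constraint is interpreted symmetrically (each offending vertex sees $p+1$ same-color neighbors, not just one distinguished vertex). Once this is phrased cleanly using Definition \ref{def:solution-domain}, the remainder of the proof is a one-line arithmetic manipulation analogous to the clique argument in Theorem \ref{theorem:optimal_facotr_coloring}.
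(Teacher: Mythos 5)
Your proposal is correct and follows essentially the same route as the paper: a $(\Delta+1)$-clique, the observation that each color can lie in the solution sets of only about $p$ clique vertices, and a count showing that larger solution domains would force more than $\alpha$ colors. Your version is in fact slightly more careful than the paper's (you correctly allow $p+1$ vertices per color where the paper writes $p$, and you phrase the adversarial-selection step explicitly), but the argument is the same.
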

\begin{proof} Suppose for contradiction that there is a valid solution domain such that every vertex has more than $\alpha \cdot \frac{p}{\Delta}$ valid options. Consider a graph $G=(V,E)$ with a clique $C \subseteq V$ of size $|C| = \Delta + 1$. Each vertex $v \in C$ has $\Delta$ neighbors, each of them has $\alpha \cdot \frac{p}{\Delta}$. Since $v$ and all its neighbors are part of a clique, each color can be an optional color of at most $p$ different vertices. It means that there are at least $(\Delta + 1) \cdot (\alpha \cdot \frac{p}{\Delta})/p > \alpha$ colors in the $p$-defective $\alpha$-coloring, which is a contradiction. 
\end{proof}
The results from the previous section can be extended and combined with the results of \cite{barenboim2014distributed}, to achieve generic algorithm for $\rho$-defective $O\left(\left(\frac{\Delta}{\rho}\right)^2\right)$-coloring which provides a solution domain of size $O\left(\frac{\Delta}{\rho}\right)$. Such an algorithm provides an optimal contingency factor. \par
For a set $S_0$ and $\Delta$ other sets $S_1,...,S_\Delta$ and an integer $\rho > 0$, sets $S_1,...,S_\Delta$ $\rho$-cover $S_0$ if each element $x\in S_0$ is contained in at least $\rho$ of the sets $S_1,...,S_\Delta$. A family $\mathcal{F}$ is \textit{$\Delta$-union $(\rho +1)$-cover-free} iff for any $S_0,S_1,...,S_\Delta \in \mathcal{F}$, the sets $S_1,...,S_\Delta$ do not $(\rho+1)$-cover $S_0$. \par
Suppose we have a a $D$-defective $\alpha$-coloring $\varphi$ of graph $G$ where $\Delta(G)=\Delta$, and a $\Delta$-union $(\rho+1)$-cover-free family $\mathcal{F}$ with $\alpha$ sets over a ground set $[m]$, for some values $D,\alpha,\Delta,\rho$. Each vertex $v$ of $G$, in parallel, can find select an element $x\in S_{\varphi(v)}$ that belongs to at most $\rho$ sets in $S_{\varphi(u)}$ where $u\in \Gamma(v)$ such that $\varphi(u) \neq \varphi(v)$. The existence of such an element $x$ is guaranteed since $\mathcal{F}$ is a $\Delta$-union $(\rho+1)$-cover-free. Setting $v$'s new color $\varphi'(v) = x$, among the neighbors of $v$ there are at most $D$ neighbors $(u \in \Gamma(v))$ such that $\varphi(v) = \varphi(u)$, and another $\rho$ neighbors at most which had $\varphi(u) \neq \varphi(v)$ and selected the same element $x$. Hence, the new coloring, $\varphi'$ is a $(\rho + D)$-defective $m$-coloring. Consequently, Barenboim et al. state the following Lemma:
\begin{lemma} \textsc{\cite{barenboim2014distributed}}
For some values $D,\alpha,\Delta,\rho$ and $m$, and suppose we are given an $D$-defective $\alpha$-coloring of graph $G$ with $\Delta(G)=\Delta$ and a $\Delta$-union $(\rho+1)$-cover-free family $\mathcal{F}$ with $\alpha$ sets over a ground set $[m]$. Then, in one round one can compute a $(\rho + D)$-defective $m$-coloring of graph $G$.
\end{lemma}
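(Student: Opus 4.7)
The plan is to formalize the construction that the paper already sketched in the paragraph preceding the lemma. In one round of communication, each vertex $v$ sends its current color $\varphi(v)$ to all its neighbors, so that after this round every $v$ knows $\{\varphi(u) : u \in \Gamma(v)\}$. Then, using only local computation, each vertex chooses its new color deterministically from the set system $\mathcal{F}$.

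The key combinatorial step is the color-selection rule. For vertex $v$, let $N_v = \{u \in \Gamma(v) : \varphi(u) \neq \varphi(v)\}$; since $|N_v| \leq \Delta$, the collection $\{S_{\varphi(u)} : u \in N_v\}$ consists of at most $\Delta$ sets from $\mathcal{F}$ (padding arbitrarily with other sets of $\mathcal{F}$ if there are fewer than $\Delta$ distinct neighbor colors, which only makes covering harder). By the $\Delta$-union $(\rho+1)$-cover-free property applied to $S_0 = S_{\varphi(v)}$ against these sets, there must exist some element $x \in S_{\varphi(v)}$ that lies in at most $\rho$ of the sets $S_{\varphi(u)}$, $u \in N_v$. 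Each vertex picks such an $x$ by a fixed tie-breaking rule (e.g., the smallest admissible element), so the selection is deterministic and requires no further rounds. Set $\varphi'(v) := x \in [m]$.

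It remains to verify that $\varphi'$ is $(\rho + D)$-defective. Fix a vertex $v$ and consider any neighbor $u \in \Gamma(v)$ with $\varphi'(u) = \varphi'(v)$. Split into two cases: (i) $\varphi(u) = \varphi(v)$, and (ii) $\varphi(u) \neq \varphi(v)$. By the hypothesis that $\varphi$ is $D$-defective, case (i) happens for at most $D$ neighbors. In case (ii), the equality $\varphi'(u) = x$ together with $x \in S_{\varphi(v)}$ and $\varphi'(u) \in S_{\varphi(u)}$ forces $x \in S_{\varphi(u)}$; by the choice of $x$, there are at most $\rho$ such neighbors $u \in N_v$. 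Summing gives at most $D + \rho$ neighbors sharing $v$'s new color, which is exactly the defectiveness bound required.

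I do not anticipate a serious obstacle here: the argument is essentially a careful case analysis pinned to the defining property of $\Delta$-union $(\rho+1)$-cover-free families. The only subtlety worth flagging in the write-up is the padding step needed when $|N_v| < \Delta$ or when some $\varphi(u)$'s coincide, to ensure we apply the cover-free guarantee to an admissible configuration of $\Delta$ sets; and the fact that the new color $\varphi'(v) \in [m]$ is determined entirely by $\varphi(v)$, by $\{\varphi(u) : u \in \Gamma(v)\}$, and by the publicly known family $\mathcal{F}$, so the entire procedure runs in one communication round as claimed.
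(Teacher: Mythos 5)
Your proposal is correct and follows essentially the same route as the paper, which proves this lemma only via the expository paragraph preceding it: one round to exchange current colors, selection of an element $x \in S_{\varphi(v)}$ covered by at most $\rho$ of the neighbors' sets with differing colors, and the $D + \rho$ case split. Your added care about padding to exactly $\Delta$ sets and counting coinciding neighbor colors with multiplicity is a welcome tightening of the paper's sketch, not a departure from it.
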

Next we will show a construction which is based on the $\Delta$-union $(\rho+1)$-cover-free family devised by \cite{barenboim2014distributed}, but it slightly different since it fills one extra constraint that each set $S_0$ will contain at least $\frac{\Delta}{\rho+1}$ values which are not $(\rho+1)$-covered by any other $\Delta$ sets $(S_1,...,S_\Delta))$. \par
Consider the set of polynomials Poly($d,q$), and the family $\mathcal{F} = \{ S_g \mid g() \in \mathrm{Poly}(d,q) \}$ which were defined in section \ref{section:poly-coloring}. In order to hit each element of $X_0 \in \mathcal{F}$ for $\rho+1$ times, one needs at least $q\cdot (\rho+1)/d$ other polynomials. Let $\Delta = \lceil q \cdot (\rho+1)/d \rceil -1$, $m=q^2$ and $n=q^{d+1}$, where $\left|\mathcal{F}\right| = n$. It follows that $m\cdot\log^2{m}=4q^2\log^2{q} \leq 4\left(\frac{q}{d+1}\right)^2\cdot (d+1)^2\log^2{q} \leq 4\left(\frac{\Delta+1}{\rho+1}\right)^2\log^2{n} \Rightarrow m\leq 4\left(\frac{\Delta+1}{\rho+1}\right)^2\log^2{n}$. Consequently, one can use $\mathcal{F}$ in order to turn a valid $n$-coloring into $\rho$-defective $O\left(\frac{\Delta+1}{\rho+1}\right)^2\log^2{n}$-coloring.
Additionally, Setting $d=2$, we obtain $m=q^2, n=q^3$. Setting $\Delta \geq q\cdot(\rho+1)/2-1$, for $n \leq 8\cdot \left(\frac{\Delta+1}{\rho+1}\right)^3$ we get $m\leq 4\cdot \left(\frac{\Delta+1}{\rho+1}\right)^2)$. Furthermore, setting $q \geq 3\Delta/(\rho+1)$, there will be at least $\Delta/(\rho+1)$ elements in the set $S_0 \in \mathcal{F}$ that will be part of less than $\rho$ other sets for any other $S_1,...,S_\Delta \in \mathcal{F}$. By setting this system of polynomials, we can devise an algorithm that will prove Theorem \ref{theorem:poly_defective} \begin{theorem}
\label{theorem:poly_defective}
Given a graph $G=(V,E)$ $(|V|=n)$ with maximum degree $\Delta$, and a fixed parameter $1\leq p \leq \Delta$, there is a generic algorithm that calculates $p$-defective $O\left(\left(\frac{\Delta}{p}\right)^2\right)$-coloring with a solution domain of size $O(\Delta/p)$ and a contingency factor of at least $\Omega(\Delta/p)$, in $O(\log^*{n})$ rounds.
\end{theorem}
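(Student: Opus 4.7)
The plan is to imitate the $O(\Delta^2)$-coloring construction of Section \ref{section:poly-coloring}, but substitute the final $\Delta$-cover-free reduction with a single application of the $\Delta$-union $(p+1)$-cover-free polynomial family described just before the statement. First I would run the ordinary Linial $0$-defective palette reduction: starting from the proper $n$-coloring given by vertex IDs, a single round using the standard polynomial cover-free family takes an $N$-coloring to a proper $O(\Delta^2\log^2 N)$-coloring, so after $O(\log^* n)$ rounds I obtain a proper coloring $\varphi$ with $\alpha = O(\Delta^2\log^2\Delta)$ colors. This phase is identical to the corresponding phase in Section \ref{section:poly-coloring} and adds no defectiveness.

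In the final round, each vertex $v$ interprets its color $\varphi(v)$ as the index of a polynomial $g_v \in \mathrm{Poly}(d,q)$, where $q = \Theta(\Delta/(p+1))$ and $d$ is a constant chosen large enough that $q^{d+1} \geq \alpha$ (since $\log \alpha = O(\log \Delta)$, a constant $d$ suffices except in the trivial regime $p = \Theta(\Delta)$). After exchanging $\varphi$-values with neighbors, vertex $v$ publishes its solution domain
$$\hat\varphi(v) = \Bigl\{ x \in S_{g_v} : \bigl|\{u \in \Gamma(v) : x \in S_{g_u}\}\bigr| \leq p \Bigr\}.$$
Because $|S_{g_v} \cap S_{g_u}| \leq d$ for distinct polynomials, the total cover count of $S_{g_v}$ by its at most $\Delta$ neighbor sets is at most $d\Delta$, so at most $d\Delta/(p+1)$ elements of $S_{g_v}$ appear in more than $p$ of them. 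Choosing $q \geq (d+1)\Delta/(p+1)$ leaves at least $\Delta/(p+1) = \Omega(\Delta/p)$ elements, which forms the required solution domain.

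For correctness, any element $x \in \hat\varphi(v)$ lies in the set $S_{g_u}$ of at most $p$ neighbors $u$, and only those neighbors can possibly pick $x$; hence any profile of per-vertex choices forms a $p$-defective coloring with palette of size $m = q^2 = O((\Delta/p)^2)$. This yields problem domain $O((\Delta/p)^2)$, solution domain $\Omega(\Delta/p)$, contingency factor $O(\Delta/p)$ matching the lower bound of Theorem \ref{theorem:lower_bound_defective}, and round count $O(\log^* n) + O(1) = O(\log^* n)$. The main obstacle I anticipate is the averaging argument for the cover multiplicities: one must verify under the worst-case adversarial choice of neighbor polynomials that the pairwise intersection bound $|S_{g_v}\cap S_{g_u}| \leq d$ really forces at least $\Omega(\Delta/p)$ elements of $S_{g_v}$ to have multiplicity $\leq p$, and that the alignment of $\alpha$ with $q^{d+1}$ survives as $p$ grows; edge cases such as $p = \Omega(\Delta)$ are handled by the trivial coloring that assigns the same color to every vertex.
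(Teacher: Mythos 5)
Your final-round mechanism (keep the elements of $S_{g_v}$ that lie in at most $p$ neighbour sets, bound the total cover count by $d\Delta$ via $|S_{g_v}\cap S_{g_u}|\leq d$, and conclude by Markov that at least $q-d\Delta/(p+1)$ elements survive) is exactly the paper's modified $\Delta$-union $(p+1)$-cover-free construction, and your defect accounting for the chosen element is sound. The genuine gap is in what you feed into that round. Because you keep the intermediate coloring \emph{proper}, the last round must index $\alpha=\Omega(\Delta^2\log^2\Delta)$ colors by polynomials in $\mathrm{Poly}(d,q)$, i.e.\ $q^{d+1}\geq\alpha$, while simultaneously $q\geq(d+1)\Delta/(p+1)$ (for any elements to survive) and $m=q^2=O((\Delta/p)^2)$ (which forces $d=O(1)$). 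These constraints together give $(p+1)^{d+1}=O(\Delta^{d-1}/\log^2\Delta)$; for $d=2$ that is $p=O\left((\Delta/\log^2\Delta)^{1/3}\right)$. So the construction does not fail only "in the trivial regime $p=\Theta(\Delta)$'': it already fails for $p=\sqrt{\Delta}$, where the target (a $\sqrt{\Delta}$-defective $O(\Delta)$-coloring with solution domain $\Omega(\sqrt{\Delta})$) is far from trivial, and your fallback of giving every vertex the same color is only $p$-defective when $p\geq\Delta$, not for $p=\Theta(\Delta)$ with a constant below $1$.

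The paper avoids this by not keeping the coloring proper until the end: it runs the defective-coloring reduction of \cite{barenboim2014distributed}, which spends the defect budget $p$ across the $O(\log^* n)$ iterations so that the palette entering the last round is already $O\left((\Delta/(p+1))^2\cdot\mathrm{polylog}\right)$, and only then substitutes the modified family with $d=2$ and $q\geq 3\Delta/(p+1)$ to extract the $\Omega(\Delta/p)$ surviving options. If you want to retain your cleaner "proper until the final step'' structure, you would need to iterate your final step several times with a partition of the defect budget among the iterations, which essentially re-derives that reduction; a single application cannot compress from $\Delta^2$ colors to $(\Delta/p)^2$ colors for general $p$.
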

\begin{proof}
We start by applying the coloring algorithm from \cite{barenboim2014distributed} which will produce a $p$-defective $O\left(\left(\frac{\Delta+1}{p+1}\right)^2\right)$-coloring in $O(\log^*{n})$ rounds.

For a sufficiently large $\Delta$, it holds that $O(\Delta^2\log^2{\Delta}) \leq (3\Delta)^3$. Hence, if we use a field of characteristic $q \geq 3\Delta/(p+1)$ at the very last round, each vertex will have at least $O(\Delta/p)$ valid options for a $p$-defective $O\left(\frac{\Delta+1}{p+1}\right)^2$-coloring. Hence, we obtain a generic algorithm for a $p$-defective $O\left(\left(\frac{\Delta}{p}\right)^2\right)$-coloring with a solution domain of size $O(\Delta/p)$ and a contingency factor of at most $O(\Delta/p)$. 
\end{proof} 
The contingency factor is optimal by Theorem \ref{theorem:lower_bound_defective}.

\subsection{Generic Algorithm for Network Decomposition}
As was mentioned before, network decomposition may be referred as a labeling problem where the cluster IDs are the labels and the clusters assignment is the labeling function. Hence, network decomposition problems are good candidates for generic algorithms. However, considering the term of privacy in the network decomposition problem, different definitions may be suggested. One may suggest a permissive notion where all the members of the same cluster are allowed to share their private data with each other. This permissive notion makes sense since network decomposition is frequently used as a building block in other algorithms (such as coloring or finding MIS) where in the first stage each vertex discovers its cluster's topology, calculates private solution for the entire cluster, and then communicates with other clusters in order to generate an overall solution. However, even on a restrictive notion where each vertex may know only its own cluster assignment, some efficient algorithms may be suggested. Hence, during this work we will use the restrictive notion. \par
In section \ref{section:secure_cole_vishkin} we have shown how multiple simultaneous executions of the same random algorithm can expand the solutions domain and provide a generic algorithm for graph coloring problems. This approach can be adopted in order to expand the solution domain of network decomposition algorithms, However, since the network decomposition should satisfy certain constraints (namely, the depth of the clusters and the chromatic number of the auxiliary graph), this approach should be implemented carefully.
We will use the weak-diameter $(O(\log{n}),O(\log{n}))$-network decomposition random algorithm devised by Linial and Saks \cite{linial1993low}. This algorithm runs in $O(\log^2{n})$ rounds. Our generic algorithm proceeds as follows. Given a graph $G=(V,E)$, and a positive integer $c>1$, execute Linial and Saks's algorithm for $c$ times simultaneously, in parallel. Each of the execution will have its own serial number $i\in \{1,...,c\}$. Let $C_i:V \rightarrow \{1,...,O(\log{n})\}$ be a set of labeling functions ($1\leq i \leq c$), such that $C_i(v)=j$ iff vertex $v\in V$ was assigned by the $i$-th execution to cluster $\mathcal{C}_j$. For each vertex $v\in V$ we assign a set of $\log{n}$ different possible labels $C(v)=<1,C_1(v)>,...,<c,C_{c}(v)>$. Each of the labels will represent a distinct cluster ID. These labels are different since even if the independent executions produced the same cluster assignments, the first parameter on each tuple representing the label will be different since it represents the unique ID of each independent execution. \par
The clusters assignment described above is a privacy preserving $(O(\log{n}),O(c\cdot\log{n}))$-network decomposition. 
\begin{lemma}
For each of the possible labels, the weak diameter of each cluster is $O(\log{n})$ at most.
\end{lemma}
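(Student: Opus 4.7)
The plan is to reduce the claim directly to the weak-diameter guarantee of the underlying Linial--Saks algorithm, by observing that the ``parallel'' construction is really just a disjoint union of independent single-execution labelings, with the first coordinate $i$ of each tuple $\langle i, C_i(v)\rangle$ serving only as a tag that prevents labels from different executions from being identified with one another.

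First I would fix an arbitrary possible label $\ell = \langle i, j\rangle$ and identify the cluster it represents, namely the set $\mathcal{C}_{i,j} = \{v \in V : C_i(v) = j\}$, i.e.\ the vertices assigned to cluster $j$ in the $i$-th independent execution. Because the first coordinate $i$ appears in every label of this cluster and in no label coming from a different execution, a vertex belongs to the cluster associated with $\ell$ in our combined labeling if and only if it belonged to cluster $j$ in the $i$-th run of Linial--Saks. Thus the cluster structure induced by the label $\ell$ coincides exactly with a single cluster of a standalone Linial--Saks execution.

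Next I would invoke the guarantee of Linial and Saks's algorithm \cite{linial1993low}: each cluster it produces has weak diameter $O(\log n)$ in $G$. Applied to execution $i$ and cluster $j$, this says that for every pair $u,v \in \mathcal{C}_{i,j}$, $\mathrm{dist}_G(u,v) = O(\log n)$. Since the cluster associated with label $\ell$ is exactly $\mathcal{C}_{i,j}$, the same weak-diameter bound holds in our decomposition. As $\ell$ was arbitrary, the bound holds for every possible label.

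The only thing to be careful about is that running the $c$ executions in parallel might conceivably produce a cluster that is the union of clusters from different executions; this is what the tagging by $i$ in the tuple $\langle i, C_i(v)\rangle$ is designed to prevent. I would make this explicit by noting that two labels $\langle i,j\rangle$ and $\langle i', j'\rangle$ are equal only if $i=i'$, so clusters from distinct executions are always distinguishable and never merged. Once this is observed, no further estimation is needed and the weak-diameter bound is inherited verbatim from \cite{linial1993low}. The ``hard part'' is therefore only the conceptual step of seeing that the parallel construction preserves each underlying cluster intact; the quantitative bound itself is entirely off-the-shelf.
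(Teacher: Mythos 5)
Your proposal is correct and follows essentially the same route as the paper's own proof: both arguments observe that two vertices sharing a label $\langle i,j\rangle$ must have been placed in the same cluster of the $i$-th Linial--Saks execution (the tag $i$ preventing any merging across executions), and then inherit the $O(\log n)$ weak-diameter bound directly from \cite{linial1993low}. No gap to report.
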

\begin{proof} Given a graph $G=(V,E)$ with two vertices $v,u$ assigned to the same cluster, i.e. there is a label $x$ such that $x\in C(u)$ and $x \in C(v)$. Hence, there was a specific execution ($i$) such that $x=<i,C_i(u)>=<i,C_i(v)>$, and therefore $C_i(u)=C_i(v)$. It means that on the $i$-th execution of Linial Saks's $(O(\log{n}), O(\log{n}))$-network decomposition, both $u$ and $v$ were part of cluster $C_i(u)=C_i(v)$. Consequently, by the proof of the correctness of Linial Saks's algorithm \cite{linial1993low}, the weak diameter of that cluster is at most $O(\log{n})$, i.e. there is a path between $u$ and $v$, $P_{u,v}=\{(u,w_1),(w_1,w_2),...,(w_k,v)\}\subseteq E$, of length at most $O(\log{n})$. 
\end{proof}
Each vertex $v\in V$ selects independently in random a cluster assignment $\mathcal{C}_k$ from $C(v)$. Then, 
we define an auxiliary graph $\mathcal{G}=(\mathcal{V},\mathcal{E})$ such that $\mathcal{V} = \{\mathcal{C}_1,...,\mathcal{C}_\alpha\}$ and $(\mathcal{C}_u,\mathcal{C}_v)\in \mathcal{E}$, ($\mathcal{C}_u,\mathcal{C}_v\in \mathcal{V}$) iff $\exists (u,v) \in E$ such that $u\in\mathcal{C}_u$ and $v\in\mathcal{C}_v$.
\begin{lemma}
For each of the possible labels, the auxiliary graph $\mathcal{G}$ has a chromatic number of $O(c\cdot\log{n})$ at most.
\end{lemma}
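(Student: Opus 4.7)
The plan is to build a proper coloring of $\mathcal{G}$ by combining the colorings already guaranteed by Linial and Saks's algorithm for each of the $c$ parallel executions, tagged by the execution index. Concretely, each cluster in $\mathcal{G}$ carries a label of the form $\langle i, j\rangle$, so the vertex set of $\mathcal{G}$ is naturally partitioned into $c$ classes, one per execution. For the $i$-th execution, Linial and Saks's correctness theorem yields an auxiliary graph $\mathcal{G}_i$ on clusters whose chromatic number is $O(\log n)$; fix any proper coloring $\chi_i$ of $\mathcal{G}_i$ using $O(\log n)$ colors.

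Next I would define the combined coloring $\chi(\mathcal{C}_{\langle i, j\rangle}) = (i, \chi_i(j))$, so $\chi$ uses at most $c \cdot O(\log n) = O(c\log n)$ distinct color tuples. To verify propriety, I would take an arbitrary edge $(\mathcal{C}_{\langle i_1, j_1\rangle}, \mathcal{C}_{\langle i_2, j_2\rangle})$ in $\mathcal{E}$. By definition such an edge arises from an edge $(u,v) \in E$ where $u$ ended up in cluster $\langle i_1, j_1\rangle$ and $v$ in cluster $\langle i_2, j_2\rangle$ after the private random choice. Split into cases: if $i_1 \neq i_2$ the tuples differ in the first coordinate and are thus distinct; if $i_1 = i_2 = i$, then the chosen tuple for $u$ satisfies $C_i(u) = j_1$ and for $v$, $C_i(v) = j_2$, so the pair $(j_1, j_2)$ is already an edge of $\mathcal{G}_i$ (since the same underlying edge $(u,v)$ witnesses it there), hence $\chi_i(j_1) \neq \chi_i(j_2)$ and the tuples differ in the second coordinate.

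The only subtlety worth flagging is the observation that the ``restriction'' of $\mathcal{G}$ to clusters of a single execution $i$ is actually a subgraph of Linial--Saks's auxiliary graph $\mathcal{G}_i$: in $\mathcal{G}$ we only count edges $(u,v)$ for which both endpoints happened to pick execution $i$, whereas $\mathcal{G}_i$ includes every edge of $G$. This inclusion is precisely what lets the coloring $\chi_i$ inherited from Linial--Saks remain valid on our smaller graph, so no new colors beyond the $O(\log n)$ per execution are needed. I do not expect a serious obstacle; the argument is essentially a product coloring, and the work is in carefully unpacking the definitions so that the case $i_1 = i_2$ really does reduce to a single Linial--Saks decomposition.
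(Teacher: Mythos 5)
Your proof is correct, but it takes a genuinely different (and heavier) route than the paper's. The paper's proof is a one-line counting argument: since each of the $c$ executions uses cluster labels from $\{1,\dots,O(\log n)\}$, the auxiliary graph $\mathcal{G}$ has at most $c\cdot O(\log n)=O(c\log n)$ vertices in total, so assigning every cluster its own (unique) label is already a proper coloring with $O(c\log n)$ colors --- no appeal to the chromatic-number guarantee of Linial--Saks is needed at all. You instead build a product coloring $\chi(\mathcal{C}_{\langle i,j\rangle})=(i,\chi_i(j))$ from proper colorings $\chi_i$ of the per-execution auxiliary graphs $\mathcal{G}_i$, and verify propriety by the case split on whether the two endpoints chose the same execution; your observation that the restriction of $\mathcal{G}$ to execution $i$ is a subgraph of $\mathcal{G}_i$ is the right justification for the $i_1=i_2$ case. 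Both arguments yield the same $O(c\log n)$ bound here, because in this construction the number of labels per execution coincides with the chromatic-number bound. What your version buys is generality: it would still give $O(c\cdot\chi(\mathcal{G}_i))$ colors even if each execution produced many more clusters than colors, a situation where the paper's vertex count would be lossy. What the paper's version buys is brevity and independence from any structural property of Linial--Saks beyond the size of its label range.
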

\begin{proof} Since there are, in total, $c \cdot O(\log{n}) = O(c\cdot\log{n})$ different cluster labels, any cluster can pick its label as its color. Since, the label of the clusters is unique, this is a legal vertex coloring in the auxiliary graph. Hence the chromatic number of $\mathcal{G}$ is at most $O(c\cdot\log{n})$. 
\end{proof} \par
The combination of the two Lemmas above, yields Theorem \ref{theorem:network-decomposition}.
\begin{theorem}
\label{theorem:network-decomposition}
Given a graph $G=(V,E)$, there is a generic algorithm which calculates weak-diameter $(O(\log{n}), O(c\cdot\log{n}))$-network decomposition in $O(\log^2{n})$ rounds. The algorithm produces $c$ valid possible cluster assignments for each vertex $v\in V$.
\end{theorem}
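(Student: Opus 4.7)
The plan is to assemble the theorem directly from the two lemmas established just above, together with a round-complexity argument for running Linial and Saks's algorithm in parallel. Formally, the generic algorithm simply spawns $c$ independent invocations of the Linial--Saks weak-diameter $(O(\log n), O(\log n))$-network decomposition procedure, indexed by $i \in \{1, \ldots, c\}$, producing labelings $C_1, \ldots, C_c$. Each vertex $v$ then outputs the set $C(v) = \{\langle 1, C_1(v)\rangle, \ldots, \langle c, C_c(v)\rangle\}$ as its solution domain.

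First I would argue the round complexity. Because all $c$ executions use disjoint message tags (the first coordinate of each tuple), a single round of the $\mathcal{LOCAL}$ model can carry the transcripts of all $c$ executions simultaneously, so the $c$ parallel invocations complete within the same $O(\log^2 n)$ rounds required by a single Linial--Saks execution. This gives the claimed round bound without any overhead in $c$.

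Next I would verify the two structural requirements of a $(O(\log n), O(c\cdot \log n))$-network decomposition. The weak-diameter bound is exactly the content of the first lemma above: any two vertices sharing a label $x = \langle i, C_i(u)\rangle = \langle i, C_i(v)\rangle$ were placed in the same Linial--Saks cluster during execution $i$, so they are connected in $G$ by a path of length $O(\log n)$. The chromatic-number bound is the content of the second lemma: the total number of distinct cluster labels across the $c$ executions is $c \cdot O(\log n) = O(c \log n)$, and labeling each auxiliary vertex by its own cluster identifier is a proper coloring of $\mathcal{G}$. Finally, the solution-domain claim is immediate: each vertex produces exactly $c$ labels, one per execution, and any single choice drawn from $C(v)$ is consistent with a valid network decomposition by the two lemmas.

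There is no real obstacle here beyond verifying that the parallel runs do not interfere: the only subtlety is ensuring that the independently chosen cluster names across executions do not accidentally collide, which is precisely why each label is tagged with the execution index $i$. Once that bookkeeping is in place, the theorem follows by conjunction of the two lemmas.
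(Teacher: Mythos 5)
Your proposal is correct and follows essentially the same route as the paper: the paper's proof is simply the observation that the two preceding lemmas (weak diameter $O(\log n)$ per cluster, chromatic number $O(c\cdot\log n)$ of the auxiliary graph) combine to give the theorem, with the $c$ tagged parallel Linial--Saks executions as the algorithm. Your additional remarks on why the parallel executions incur no round overhead in the $\mathcal{LOCAL}$ model and why the execution-index tag prevents label collisions merely make explicit what the paper leaves implicit.
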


Since the size of the problem domain is $O(c\cdot\log{n})$ and the solution domain is of size $c$, the contingency factor is $O(\log{n})$. \par
Usually, it is useful to set the parameters of the network decomposition to be polylogarithmic in $n$. Hence, it may be useful to set $c=\log{n}$ and get an $(O(\log{n}), O(\log^2{n}))$-network decomposition. On the other hand, in order to preserve privacy, setting $c=\min(\Delta,\log{n})$ is sufficient as it allows each of the $\Delta$ neighbors of each vertex to have a different set of possible cluster assignments.

\subsection{Generic Algorithms for Forest Decomposition}
An \textit{oriented tree} is a directed tree $T=(V,E,r)$ where $r\in V$ is the root vertex, where every vertex $v \in V$ knows the identity of its parent $\pi(v)$ and has an oriented edge $(v,\pi(v))$. An \textit{oriented forest} is such a graph that any of its connected components are oriented trees. Any graph $G=(V,E)$ with maximum degree $\Delta$ can be decomposed into a set of $\Delta$ edge-disjoint forests $F_1,...,F_\Delta (F_i=(V_{F_i},E_{F_i})$ such that $E=\bigcup\limits_{1 \leq i \leq \Delta}{E_{F_i}}$. The problem of how to decompose a graph into forests can be viewed as a labeling problem where each edge should have a label $1 \leq i \leq \Delta$ that represents the forest $F_i$ which it belongs to. Since in every oriented forest, each vertex has at most 1 parent, each vertex will have at most $\Delta$ outgoing edges, each belongs to a different forest. Hence, for each vertex $v\in V$ there are $\Delta \choose \deg_{out}(v)$ different options to associate edges to different forests. From the edge's point of view, each of the $\Delta$ labels is a valid possible label. \par
Panconesi and Rizzi \cite{panconesi2001some} devised an algorithm for $\Delta$-forest decomposition of a general undirected graph in 2 rounds. Their algorithm can be viewed as two separate algorithms, each of a single round. The first algorithm is a simple yet powerful way to decompose a directed acyclic graph with maximum outgoing degree $d$ into $d$ oriented forests. The second is a a way to turn an undirected graph with maximum degree $\Delta$ into a directed acyclic graph with maximum outgoing degree $\Delta$. Each of these algorithms run in a single round. Combining these two algorithms produces a 2 round algorithm for $\Delta$-forest decomposition of any undirected graph with maximum degree $\Delta$. \par
Next we will describe Panconesi and Rizzi's algorithm for forest decomposition of oriented graphs. We will show how this algorithm can be modified in order to preserve privacy while maintaining a contingency factor of 1. Later, we will show two algorithms which produce a directed acyclic graphs. The first is Panconesi and Rizzi's algorithm for orienting any general undirected graph. The second algorithm (due to \cite{barenboim2010sublogarithmic}) performs an acyclic orientation for a graph with bounded arboricity $a$ such that the maximum outgoing degree is $\lfloor 2+\epsilon\rfloor\cdot a$. The combination of these algorithms yields a $\Delta$-forest decomposition for graphs with maximum degree $\Delta$ and $\lfloor 2 + \epsilon \rfloor \cdot a$-forest decomposition for graphs with bounded arboricity $a$. Both of them fit the constraint of preserving privacy.

\subsubsection{Forest Decomposition of Oriented Graphs}
Given a directed acyclic graph $G=(V,E)$, such that each vertex has a set of outgoing edges $E(v)=\{(v,u) \mid (v,u)\in E\}$ the single round algorithm of Panconesi and Rizzi \cite{panconesi2001some} goes as follows. Each vertex $v \in V$, in parallel, assigns a distinct number $1 \leq i \leq \left| E(v) \right|$ to each $e \in E(v)$. Let $\hat{E_i}$ be the set of all edges that were assigned with the number $i$. The forest decomposition is the set of forests $F_1,...,F_\Delta$ where $F_i = (V,\hat{E_i})$. The correctness of the algorithm was proved by \cite{panconesi2001some}. While in the original algorithm the nodes do not assign the labels randomly, in our algorithm a random assignment is required. Next, we analyze the privacy of the algorithm.  \par
\begin{theorem}
\label{theorem:forest-decomposition}
Panconesi and Rizzi's forest decomposition algorithm with random label assignment is privacy preserving and it has a contingency factor of 1. 
\end{theorem}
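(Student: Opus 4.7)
The plan is to verify two claims: (i) the randomized variant of Panconesi and Rizzi's algorithm still produces a valid $\Delta$-forest decomposition, and (ii) its contingency factor equals $1$. I would first pin down the randomization precisely: each vertex $v$, in parallel, draws uniformly at random an injection from its outgoing edge set $E(v)$ into $\{1,\ldots,\Delta\}$, and labels each outgoing edge by its image under this injection. This keeps the round complexity identical to the original single-round algorithm (once the DAG orientation is given).

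For correctness, I would invoke Panconesi and Rizzi's original analysis, which requires only that all outgoing edges of a vertex receive distinct labels. Under that condition, each subgraph $F_i$ is a forest because the underlying graph is a DAG and each vertex has at most one outgoing edge in $F_i$, precluding cycles. Since our random injection preserves the distinct-label property for every choice of random bits, the algorithm's output is always a valid $\Delta$-forest decomposition. For the contingency factor, I would fix an arbitrary edge $e = (v,u)$ and any label $i \in \{1,\ldots,\Delta\}$, and observe that by the symmetry of the uniform random injection at $v$, the probability that $e$ receives label $i$ equals exactly $1/\Delta$, independently of the choices at other vertices. Hence each of the $\Delta$ possible forest-indices is a genuine member of $e$'s solution domain, matching the problem domain of size $\Delta$, so the contingency factor is $\Delta/\Delta = 1$. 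From the source's perspective the statement is equivalent: every one of the $\binom{\Delta}{|E(v)|}$ valid label subsets for $v$'s outgoing edges is reached with positive probability under the uniform injection, so the solution domain saturates the problem domain at the vertex level as well.

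The main subtlety will be pinning down the adversary model underlying the privacy claim, since two edges sharing a source vertex $v$ implicitly learn a relational constraint (their labels are distinct). Strictly speaking the $1/\Delta$ guess probability applies to adversaries that do not share a source with $e$; for co-sourced edges, any specific remaining label is guessed with probability $1/(\Delta-1)$, still within the privacy regime. I would state this distinction explicitly and argue that the information leaked via the distinct-labels constraint does not shrink the solution-domain count, which is the statistic governing the contingency factor, so the theorem's claim is unaffected.
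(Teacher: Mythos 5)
Your proposal is correct and takes essentially the same route as the paper: both arguments count the problem domain and solution domain as equal (size $\binom{\Delta}{|E(v)|}$ from the vertex's view, size $\Delta$ from the edge's view) to get a contingency factor of $1$, and both rest privacy on the fact that the labels are chosen by a private uniform random injection with no inter-vertex communication that could leak them. Your added caveat about co-sourced edges sharing a distinct-labels constraint is a refinement the paper does not spell out, but as you note it does not change the solution-domain count or the conclusion.
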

\begin{proof} 
Since the algorithm does not perform any kind of communication between the vertices, any kind of information cannot possibly leak from one vertex to another. As so, the forest decomposition algorithm has similar privacy characteristic as of choosing a color out of several valid colors. Since for every vertex $v \in V$ there are $\left| E(v) \right|$ outgoing edges, it has  $\left| E(v) \right|$ different parents, each on different forest. Therefore, any of the outgoing edges ($E(v)$) should be part of different forest on the forest decomposition. Hence, there are $\Delta \choose {\left|E(v)\right|}$ valid edge assignments, i.e. the size of the problem domain is $\Delta \choose {\left|E(v)\right|}$. However, any assignment which will assign any outgoing edge to a different forest will be valid, which mean the solution domain is of size $\Delta \choose {\left|E(v)\right|}$ too. Hence, the contingency factor is ${\Delta \choose {\left|E(v)\right|}}/{\Delta \choose {\left|E(v)\right|}} = 1$. \par
From the edge's point of view, each of the $\Delta$ labels is a valid possible label. Hence both the problem domain and the solution domain are of size $\Delta$. Henceforth, the contingency factor is 1.  
\end{proof}

\subsubsection{Acyclic Orientation of Graphs}
Panconesi and Rizzi \cite{panconesi2001some} showed that the simple orientation where each edge is oriented towards the vertex with the higher ID, is an acyclic orientation. Hence, any undirected graph can achieve an acyclic orientation in a single round, and can be decomposed privately into $\Delta$ forests in one additional round. This orientation provides each vertex with up to $\Delta!$ valid options for forest assignments. From the edge's point of view, each of the $\Delta$ labels is a valid possible label. Barenboim et al. \cite{barenboim2010sublogarithmic} devised an $O(\log{n})$-rounds algorithm that receives a graph with bounded arboricity $a$ and performs an acyclic orientation with maximum outgoing degree of $\lfloor 2 + \epsilon \rfloor \cdot a$. This orientation is achieved by partitioning the vertices of a graph $G$ into $l=\lfloor \frac{2}{\epsilon}\log{n} \rfloor$ sets $H_1,...,H_l$ such that each vertex $v\in H_i (i\in \{1,...,l\})$ has at most $(2+\epsilon)\cdot a$ neighbors in $\cup _{j=i}^l{H_j}$. Then, the orientation is done such that each edge $(u,v) \in E$ with endpoints $u\in H_i$ and $v\in H_j$, points towards the vertex that belongs to the higher ranked set (in case the two endpoints belong to the same set, the edge will point towards the vertex with the higher ID). This orientation provides each vertrex with up to ${(\lfloor 2 + \epsilon \rfloor \cdot a)}\choose {|E(v)|}$ valid options for forest assignments. From the edge's point of view, each of the $\lfloor 2+\epsilon\rfloor\cdot a$ labels is a valid possible label.
\subsection{Generic Algorithms for Graph Coloring of Graphs With Bounded Arboricity $a$}
The forest decomposition algorithms that was described above can be used as building blocks for other distributed algorithms for classic graph theory problems as graph coloring. In the following chapter we will use the $\lfloor 2 + \epsilon \rfloor \cdot a$-forest decomposition of \cite{barenboim2010sublogarithmic} that we showed in the previous chapter to achieve an $2a\cdot c \cdot \log{n}$-coloring for graphs with bounded arboricity $a$ (for any $c>1$). We will show that this coloring is private and has contingency factor of $O(a)$.\par

Combining the GENERIC-RANDOM-COLORING algorithm (algorithm \ref{algo:generic_coloring}) with the acyclic orientation algorithm devised by \cite{barenboim2010sublogarithmic} yields a secure algorithm for generic $2a\cdot c \cdot \log{n}$-Coloring for graphs with bounded arboricity $a$ such that from initial selection of $k=c \cdot \log{n}$ initial colors (for any $c>1$), each vertex has, at the end of the execution, at least $k/2$ valid optional colors. The algorithm basically performs the original random generic coloring, but it makes advantage of the acyclic orientation to break the symmetry between each pair of neighbors and make sure that only $O(a)$ neighbors constraint the valid residual colors of each vertex. \par
The algorithm consists of two steps. On the first step, the algorithm performs an acyclic orientation of graph $G$ such that the maximum outgoing degree is $\lfloor 2 + \epsilon \rfloor \cdot a$. The orientation is done by invoking the first two steps of procedure Forests-Decomposition (algorithm 2 in \cite{barenboim2010sublogarithmic}) with graph $G$ and parameter $0 < \epsilon \leq 2$. On the second step the generic coloring is done. Each vertex $v$ chooses independently at random $k=c \cdot \log{n}$ numbers from the range $[2\cdot A]$. These choices form a set of optional colors: $I_v=\{<1,x_1>,...,<k,x_k>\}$. Next, each vertex $v$ sends its set of colors $I_v$ to its children (in correspondence to the orientation). Each vertex $u$ which received a set $I_v$ from one of its parents performs $I_u \leftarrow I_u \setminus I_v$. \par
\begin{lemma}
The residual set of colors contains at least $k/2$ colors.
\end{lemma}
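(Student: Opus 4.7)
The plan is to reuse the analysis of Lemma \ref{lemma:generic_coloring_colors_left} almost verbatim, replacing the role of $\Delta$ with the maximum out-degree $A = \lfloor 2+\epsilon \rfloor \cdot a$ produced by the acyclic orientation of \cite{barenboim2010sublogarithmic}. The whole reason for orienting the graph first is precisely this: each vertex $u$ only ever subtracts the sets $I_v$ that it receives from its at-most-$A$ parents in the orientation, rather than from all $\Delta$ of its neighbors as in the original GENERIC-RANDOM-COLORING.

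First I would fix a vertex $u$ together with one of its candidate colors $\langle i, x_i\rangle \in I_u$. Since every vertex draws each $x_i$ independently and uniformly from $[2A]$, the probability that a fixed parent $v$ of $u$ chose the same value at index $i$ is exactly $1/(2A)$. Taking a union bound over the at most $A$ parents of $u$ bounds by $A \cdot 1/(2A) = 1/2$ the probability that $\langle i, x_i\rangle$ gets erased from $I_u$. Hence each of the $k$ initially chosen colors survives with probability at least $1/2$, so $\mathbb{E}[|I_u|] \geq k/2$.

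Finally, because the $k=c\log n$ initial draws are mutually independent, a Chernoff bound on the sum of the corresponding survival indicators gives that $|I_u| \geq k/2$ with probability at least $1 - 1/n^{c'}$ for some constant $c'>1$ depending on $c$, exactly as in the proof of Lemma \ref{lemma:generic_coloring_colors_left}. A union bound over all $n$ vertices then gives the claim with high probability.

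The only genuinely new ingredient compared to Lemma \ref{lemma:generic_coloring_colors_left} is the justification that $A$, and not $\Delta$, is the correct count in the union bound; this is where I expect to have to be most careful. It rests on two facts: the orientation produced by \cite{barenboim2010sublogarithmic} is acyclic with maximum out-degree at most $\lfloor 2+\epsilon\rfloor \cdot a$, and the messages in the second phase travel only from a vertex to its in-neighbors so that each $u$ removes the draws of exactly its (at most $A$) out-neighbors. Once this is spelled out, the probabilistic calculation is identical to the general-graph case.
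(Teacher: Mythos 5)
Your proof is correct and takes the same route as the paper, which simply states that the claim ``is derived directly from the proof of Lemma~\ref{lemma:generic_coloring_colors_left}''; you fill in the one substantive detail, namely that the union bound now runs over the at most $\lfloor 2+\epsilon\rfloor\cdot a$ parents in the acyclic orientation while the palette has size $2A$, so the per-color erasure probability is still at most $1/2$. The expectation and Chernoff steps then carry over verbatim, exactly as the paper intends.
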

\begin{proof} The claim is derived directly from the proof of Lemma \ref{lemma:generic_coloring_colors_left}. 
\end{proof}
\begin{lemma}
For any vertex $v \in V$, there is no color $<i,x_i>$ in the residual available colors set $I_v$ such that $<i,x_i> \in \bigcup \limits_{u\in \Gamma(v)} {I_u}$.
\end{lemma}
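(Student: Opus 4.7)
The plan is to exploit the acyclic orientation produced by the first stage: for every edge $\{u,v\} \in E$ exactly one endpoint is the parent of the other. I would fix an arbitrary edge $\{u,v\}$ and, without loss of generality, take $u$ to be the parent of $v$. In the single communication round of the second stage, $u$ transmits its chosen set $I_u$ to $v$ (along with its other children), and by the update rule $v$ immediately performs $I_v \leftarrow I_v \setminus I_u$. This one local assignment establishes $I_v \cap I_u = \emptyset$ with respect to $u$'s transmitted set.

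The only subtlety that has to be addressed is that $u$ itself may simultaneously update its own set, since $u$ may be the child of still further vertices and will remove their colors from its own set. However, the update rule is set subtraction, so it can only shrink $I_u$; denoting the final sets by $I_u^{\star}$ and $I_v^{\star}$, we have $I_u^{\star} \subseteq I_u$, and therefore
$$I_v^{\star} \cap I_u^{\star} \;\subseteq\; I_v^{\star} \cap I_u \;\subseteq\; (I_v \setminus I_u) \cap I_u \;=\; \emptyset.$$
Since this argument applies to every neighbor $u \in \Gamma(v)$ (whether $u$ is a parent or a child of $v$, we just swap the roles and rerun the same reasoning), no residual label $\langle i, x_i \rangle \in I_v$ can appear in any $I_u$, which is the statement of the lemma.

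The main obstacle I anticipate is purely expository rather than mathematical: I must keep a clean distinction between the pre-update sets (which are what is actually transmitted on the wire) and the post-update residual sets (which are what the lemma refers to). Once that bookkeeping is handled, the claim reduces to a one-line consequence of set subtraction together with the fact that the acyclic orientation guarantees a single direction of messaging per edge, so the removal performed on the child side suffices for both endpoints.
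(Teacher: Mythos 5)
Your proof is correct and follows essentially the same route as the paper's: every edge is oriented, the child subtracts the parent's transmitted set, and since subtraction only shrinks sets the final residual sets of the two endpoints are disjoint. The paper phrases this as a short proof by contradiction (via the forest containing the edge $(u,v)$) and glosses over the pre-update/post-update distinction that you handle explicitly, but the underlying argument is identical.
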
 
\begin{proof} Suppose for contradiction that $<i,x_i> \in I_u$ for some $u \in \Gamma(v)$. Let $F_j$ be the forest in $\mathcal{F}$ which includes the edge $(u,v)$. It means that either $u \in \pi_j(v)$ or $v \in \pi_j(u)$, which means that either $u$ or $v$ received it from its parent ($v$ or $u$, respectively) and should have removed it from its residual set, contradiction. 
\end{proof}
The time complexity of the algorithm follows from the time complexity of Procedure Forest-Decomposition($a,\epsilon$), which is $O(\log{n})$, plus $O(1)$ for coloring. The size of the problem domain is $2a\cdot c \cdot \log{n}$ and the size of the solution domain is $\frac{c\cdot\log{n}}{2}$. Hence, the contingency factor is $O(a)$.

\subsection{Generic Algorithms for Edge Coloring}
When considering the meaning of privacy in the context of edges, there is a slight difference between vertex coloring and edge coloring. Since the algorithms in both $\mathcal{LOCAL}$ and $\mathcal{CONGEST}$ ran on the vertices (rather than the edges, which represents communication lines) the color of each vertex should be known only to the vertex itself. On the other hand, in edge coloring, both edge endpoints are responsible for the coloring of the edge, which means that in terms of privacy preserving we may consider the edge coloring as private when at most the two endpoints of each edge know the color of the edge. However, when the graph is directed, we may demand that only the source endpoint of the edge will be aware of edge's color. \par
Nevertheless, there is a strong connection between graph vertex coloring to edge coloring. The similarity between the two problems is obvious, but more interestingly, there is a straight reduction between vertex coloring and edge coloring algorithms for general graphs. In the following section we will use this reduction in order to perform privacy preserving generic edge coloring of graphs. This reduction can be used to apply the defective graph coloring we presented in section \ref{section:poly-defective-coloring} in order to compute a defective edge coloring. There is however, a faster way to get a defective edge coloring. This technique, which is due to \cite{kuhn2009weak} will be presented in the later section. \par

\subsubsection{Edge Coloring Using Line Graphs}
Given a graph $G=(V,E)$, a \textit{line graph} $L(G)$ of graph $G$ is a graph which is constructed as follows. Each edge $e \in E$ becomes a vertex of the line graph $L(G)=(E,\mathcal{E})$. Each two distinct vertices of the line graph $e_1, e_2 \in E$ are connected ($(e_1,e_2)\in \mathcal{E}$) if they are incident to a single vertex in the original graph, i.e. there exist three vertices $v,u,w\in V$ such that $e_1=(v,u)$ and $e_2=(v,w)$. Observe that the line graph has $m \leq n^2$ vertices and a maximum degree of $\Delta(L(G)) = (2\Delta(G)-1)$. Also observe that a legal vertex coloring in the line graph $L(G)$ is a legal edge coloring in the original graph. Hence, if any vertex of the original graph is responsible for the coloring of part of its incident edges, the vertices can produce a legal graph coloring for the line graph and translate it to a legal edge coloring of the original graph. The assignment of each vertex to any incident edge can be done by specifying that for any edge $(u,v) \in E$, the vertex with the greater ID is responsible for the coloring of the edge in the line graph. \par
As a result, the algorithms provided in section \ref{section:generic_graph_coloring} can be applied to the line graph in order to produce a generic edge coloring. Given a graph $G=(V,E)$ with maximum degree $\Delta$, the algorithm for $2\Delta c\cdot \log{n}$-Coloring, applied on the line graph $L(G)$, produces an $2\cdot (2\Delta -1)\cdot c \cdot \log{(n^2)}=8\Delta \cdot c \cdot \log{n}$-edge-coloring, with solution domain of size $c \cdot \log{(n^2)}/2=c \cdot \log{n}$, which yields a contingency factor of $O(\Delta)$. The algorithm for $O(\Delta^2)$-Coloring, applied on the line graph $L(G)$, produces an $O((2\Delta -1)^2)=O(\Delta^2)$ edge coloring of the original graph, with a solution domain of size $(2\Delta-1)$, which keeps a contingency factor of $O(\Delta)$.

\subsubsection{Generic Defective Edge Coloring}
The line graph, which was presented in the previous section, can be used in order to perform a $p$-defective $O\left(\left(\frac{2\Delta-1}{p}\right)^2\right)$-generic edge coloring of the original graph by applying the algorithm from Theorem \ref{theorem:poly_defective} on the line graph. Such an algorithm will achieve a solution domain of size $O(\Delta/p)$ and a contingency factor of $O(\Delta/p)$ as well, both in $O(\log^*{n})$ rounds. There is however a faster privacy preserving algorithm for $p$-defective $O\left(\left(\frac{2\Delta-1}{p}\right)^2\right)$-defective coloring based on the algorithm of Kuhn (Algorithm 3 in \cite{kuhn2009weak}). \par
Kuhn's algorithm goes as follows. Suppose we have an undirected graph $G=(V,E)$ with maximum degree $\Delta$, and a constant $i\geq 1$. Each vertex numbers its adjacent edges with numbers between $\{1,...,\lceil\Delta/i\rceil\}$ such that each number will be assigned to at most $i$ of the vertex's adjacent edges. Then each vertex sends the number of each of its adjacent edges to the vertex on the other endpoint of the edge. Suppose that for a graph $G=(V,E)$, and an edge $(u,v) \in E$, $e_u$ and $e_v$ are the colors that was assigned to edge $e$ by vertex $u$ and $v$ (respectively). The set $\{e_u,e_v\}$ is assigned to be the color of edge $e$. Kuhn showed that this simple $O(1)$ rounds algorithm achieves a $4i-2$-defective ${\lceil\Delta/i\rceil+1} \choose {2}$-Edge Coloring. Setting $p=4i-2$ we get an $p$-defective $O\left(\left(\frac{\Delta}{p}\right)^2\right)$-edge coloring. \par
Kuhn's algorithm performs communication only between the two endpoints of each edge and only once. Hence the knowledge about the color of each edge is held only by its two endpoints. While in Kuhn's algorithm the nodes do not assign the labels randomly, in our algorithm a random assignment is required. Therefore, our algorithm preserves privacy. Since every vertex assigns the numbers independently, each of the available colors may be assigned (in certain scenario) to each edge. As a result, we achieve the following theorem. \par
\begin{theorem}
There is a privacy preserving algorithm for $p$-defective $O\left(\left(\frac{\Delta}{p}\right)^2\right)$-edge coloring with contingency factor of 1.
\end{theorem}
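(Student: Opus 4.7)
The plan is to instantiate Kuhn's Algorithm 3 from \cite{kuhn2009weak} with a randomized (rather than deterministic) label assignment and then verify the three required properties: validity of the defective edge coloring, privacy, and contingency factor equal to 1. First I would describe the modified procedure: in parallel, each vertex $v$ draws uniformly at random an assignment of labels from $\{1,\ldots,\lceil \Delta/i\rceil\}$ to its adjacent edges subject to the constraint that no label is used more than $i$ times at $v$; then $v$ sends the label it chose for each adjacent edge $e=(v,u)$ to the endpoint $u$; finally, the color of edge $e=(u,v)$ is declared to be the unordered pair $\{e_u,e_v\}$, where $e_u$ and $e_v$ are the labels chosen at $u$ and at $v$, respectively.

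Next I would show correctness of the defect bound by reusing Kuhn's analysis verbatim, since the randomization does not affect the combinatorial argument. For any edge $e=(u,v)$ with color $\{e_u,e_v\}$, an adjacent edge $e'$ sharing endpoint $u$ can have color $\{e_u,e_v\}$ only if the $u$-label of $e'$ equals $e_u$ or $e_v$; by the "at most $i$ edges per label" constraint there are at most $2i-1$ such edges at $u$ other than $e$ itself, and symmetrically at most $2i-1$ at $v$, giving defect at most $4i-2$. Setting $p=4i-2$ yields a $p$-defective $\binom{\lceil\Delta/i\rceil+1}{2}=O((\Delta/p)^2)$-edge coloring, and the running time is $O(1)$ since the only communication is a single exchange between edge endpoints.

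For privacy, I would observe that throughout the execution no message ever leaves an edge: each endpoint only transmits across edges whose label the recipient must already know to compute the color. Consequently the color of $e$ is known only to the two endpoints of $e$, which matches the privacy notion for edges defined earlier in the paper. For the contingency factor, I would argue that from the point of view of a fixed edge $e=(u,v)$, every element of the problem domain $\{\{a,b\}:a,b\in\{1,\ldots,\lceil \Delta/i\rceil\}\}$ is a valid possible color: the constraint at $u$ is only that each label appears on at most $i$ adjacent edges, so there is positive probability that $u$ assigns label $a$ to $e$ (one can complete the rest of $u$'s assignment because $u$ has at most $\Delta$ adjacent edges and $i\lceil \Delta/i\rceil\ge \Delta$ slots), and independently the same holds for $v$ with label $b$. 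Thus the solution domain equals the problem domain, so the contingency factor is $1$.

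The main obstacle is the contingency factor argument, since one must verify that fixing an arbitrary pair of labels on the single edge $e$ still leaves a feasible completion of the label assignment at each endpoint under the "at most $i$ per label" constraint; this is where a careful counting of slots is needed, but it follows immediately from $i\cdot\lceil\Delta/i\rceil\ge\Delta\ge\deg(v)$. The remaining pieces (defect bound, round complexity, absence of cross-edge communication) follow directly from Kuhn's original analysis and do not need to be reproved.
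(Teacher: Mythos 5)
Your proposal is correct and follows essentially the same route as the paper: run Kuhn's Algorithm 3 with a randomized label assignment, inherit the $4i-2$ defect bound with $p=4i-2$, note that communication never leaves an edge so only the two endpoints learn its color, and observe that every unordered pair of labels is attainable so the solution domain equals the problem domain. Your slot-counting argument ($i\cdot\lceil\Delta/i\rceil\ge\deg(v)$) for why any fixed pair on $e$ extends to a full feasible assignment is actually more careful than the paper, which simply asserts that every color may be assigned to every edge.
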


\subsection{Generic Algorithm for Edge Dominating Set Colored with $O(\sqrt{\Delta})$ Colors}
Given a graph $G=(V,E)$, an \textit{Edge Dominating Set} is a set of edges $D\subseteq E$ such that for every edge $e \in E \setminus D$ there is an edge $e'\in D$ which is adjacent to $e$. \par
In this section we will use coloring and maximal matching algorithms in order to produce an edge dominating set which is properly colored with $t\cdot \sqrt{\Delta}$ colors (for any constant integer $t>1$). While both the maximal matching and the minimal edge dominating set problems do not suit our privacy notion (due to their bipartitional nature), this partial-coloring problem variant can be calculated efficiently while still preserve privacy using special attributes of maximal matchings and edge dominating sets. While the identity of each member of the dominating set does not preserve privacy, its color is still private. As a result, the algorithm may be useful for a variety of network problems which wish to achieve privately a small amount of labels in the expense of achieving a partial coloring. Nevertheless, each uncolored edge is adjacent to at least one colored edge, from a reduced palette, which is of interest. \par
The outline of our algorithm is described in algorithm \ref{algo:dominating_edge_coloring}.\par
\vspace{5pt}
\begin{algorithm}[H]\label{algo:dominating_edge_coloring}
\caption{GENERIC\_EDGE\_DOMINATING\_COLORING}
\SetAlgoLined
\KwResult{A dominating set $D\subseteq E$ and a set of $t\cdot \sqrt{\Delta}$ colors for each edge $e \in D$ }
All the vertices run in parallel an $(c\cdot \Delta)$-Edge Coloring algorithm (for some constant $c>1$) which results in the coloring function $\varphi :E\rightarrow [c\cdot\Delta]$.\\
 Every edge is assigned to a distinct set $E_1,E_2,...,E_{\sqrt{\Delta}}$ according to its color such that for every $1\leq i \leq \sqrt{\Delta}$ and any $e \in E$, $e\in E_i$ iff $(i-1)\cdot \frac{c\cdot\Delta}{\sqrt{\Delta}} \leq \varphi(e) < i\cdot \frac{c\cdot\Delta}{\sqrt{\Delta}}$.\\
 Every set of edges $E_i$ computes independently a maximal matching $M_i$ on the graph formed by its edges. \\
 For every $1\leq i\leq \sqrt{\Delta}$, the set of colors $\{(i-1)\cdot t,(i-1)\cdot t +1,...,i\cdot t -1\}$ (for some constant $t>1$) is assigned as possible valid colors to each edge of set $M_i$ \\
\end{algorithm}
\vspace{5pt}
Given a graph $G=(V,E)$ let $\Gamma(e)$ ($e=(u,v)\in E$) be the set of the edges incident to $e$ in graph $G$, i.e. $\Gamma(e)=\{e'\in E \mid \exists{w\in V} \mathrm{\ s.t.\ } (v,w)\in E \mathrm{\ or\ } (u,w)\in E \}$. First, we analyze the correctness of algorithm \ref{algo:dominating_edge_coloring}:
\begin{lemma}
The set of edges $\bigcup\limits_{i=1}^{\sqrt{\Delta}}{M_i}$ is an edge dominating set of graph $G$.
\end{lemma}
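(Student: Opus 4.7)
The plan is to show domination edge-by-edge by unpacking the definition of a maximal matching together with the partition into the buckets $E_1,\ldots,E_{\sqrt{\Delta}}$. Recall that the algorithm assigns every edge $e\in E$ to exactly one bucket $E_i$ according to its color $\varphi(e)$ under the $(c\cdot\Delta)$-edge coloring, and then computes a maximal matching $M_i$ inside the subgraph induced by $E_i$. I want to argue that for every $e\in E$, either $e\in\bigcup_i M_i$ or $e$ shares an endpoint with some edge of $\bigcup_i M_i$.

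Fix an arbitrary edge $e\in E$, and let $i$ be the (unique) index with $e\in E_i$. Either $e\in M_i$, in which case $e$ itself certifies the domination (an edge is trivially adjacent to itself in the edge-dominating-set sense, so $e$ is dominated). Otherwise $e\notin M_i$. Since $M_i$ is a \emph{maximal} matching of the subgraph with edge set $E_i$, adding $e$ to $M_i$ must violate the matching property. Hence there exists an edge $e'\in M_i\subseteq E_i$ that shares an endpoint with $e$, i.e.\ $e'\in\Gamma(e)$. Since $e'\in M_i\subseteq\bigcup_{j=1}^{\sqrt{\Delta}}M_j$, the edge $e$ is dominated.

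Because $e$ was arbitrary, every edge of $E\setminus\bigcup_i M_i$ is adjacent to an edge of $\bigcup_i M_i$, which is precisely the definition of an edge dominating set. The only subtlety worth being explicit about in the write-up is that the maximality of $M_i$ holds relative to the subgraph on $E_i$ rather than relative to all of $E$; this is exactly what we need, since the witnessing edge $e'$ is sought in the same bucket $E_i$ as $e$. Other than flagging this point, there is no real obstacle: the proof is a direct consequence of the definition of maximal matching and the fact that the buckets $E_1,\ldots,E_{\sqrt{\Delta}}$ partition $E$.
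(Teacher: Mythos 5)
Your proof is correct and takes essentially the same approach as the paper's: both rest on the maximality of $M_i$ within the bucket $E_i$ containing $e$, with the paper phrasing the argument as a proof by contradiction and you phrasing it directly (the contrapositive). Your explicit remark that maximality is relative to the subgraph on $E_i$ is a worthwhile clarification, but the underlying argument is identical.
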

\begin{proof} Assume, for contradiction that there exists an edge $e\in E,\ e\notin\bigcup\limits_{i=1}^{\sqrt{\Delta}}{M_i}$ such that for every edge $e'\in \Gamma(e)$, $e'\notin\bigcup\limits_{i=1}^{\sqrt{\Delta}}{M_i}$. Let $j$ be an integer such that $e\in E_j$. Since none of the edges in $\Gamma(e)$ are part of any maximal matching, it means that $e$ can add itself to the maximal matching $M_j$ without affecting the validity of the matching. Hence, $M_j$ is not maximal, which is a contradiction. 
\end{proof}
The coloring is done by each vertex picking privately in random a color out of its set of valid colors.
\begin{lemma}
Any coloring selection is a valid $(t\cdot\sqrt{\Delta})$-edge coloring of the dominating set.
\end{lemma}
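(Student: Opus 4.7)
The plan is to argue via a short case analysis on how two adjacent edges of the dominating set $D = \bigcup_{i=1}^{\sqrt{\Delta}} M_i$ can arise, and show that in every case the palettes assigned by step 4 of Algorithm \ref{algo:dominating_edge_coloring} force distinct colors.

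First I would fix two distinct edges $e, e' \in D$ that share an endpoint, say $e \in M_i$ and $e' \in M_j$, and split into the cases $i = j$ and $i \neq j$. In the case $i = j$, both edges lie in the same maximal matching $M_i$, but by definition a matching contains no two edges sharing a common vertex; this case is therefore vacuous and can be dismissed in one sentence. In the case $i \neq j$, I would invoke the palette assignment from step 4: the valid colors for $e$ lie in the interval $\{(i-1)t,\, (i-1)t+1,\, \ldots,\, it-1\}$, while the valid colors for $e'$ lie in $\{(j-1)t,\, \ldots,\, jt-1\}$. Since $i \neq j$ these two length-$t$ integer intervals are disjoint, hence no matter which colors are drawn privately by the vertices responsible for $e$ and $e'$, they cannot coincide. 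This establishes properness.

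To finish, I would check the range of the coloring: each palette consists of $t$ consecutive integers and the palettes are indexed by $1 \le i \le \sqrt{\Delta}$, so the union of all palettes is contained in $\{0,1,\ldots, t\cdot \sqrt{\Delta} - 1\}$, which has size exactly $t\cdot\sqrt{\Delta}$. Combined with the previous lemma establishing that $\bigcup_i M_i$ is a dominating set, this gives a proper $(t\cdot\sqrt{\Delta})$-edge coloring of the dominating set, as required.

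There is no real obstacle here beyond being careful with the two sources of disjointness used by the argument. The ``hard'' step, such as it is, is noticing that the \emph{matching} property of $M_i$ (and not the underlying edge-coloring $\varphi$) is what rules out adjacency within a single $M_i$; the edge-coloring $\varphi$ and the partition of $E$ into $E_1,\ldots,E_{\sqrt{\Delta}}$ serve only to enable computing the maximal matchings in parallel and to guarantee the palette ranges are well-defined, but do not themselves enter the proof of proper coloring.
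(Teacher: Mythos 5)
Your proof is correct and follows essentially the same approach as the paper's: adjacency within a single $M_i$ is impossible by the matching property, and adjacency across $M_i$ and $M_j$ ($i\neq j$) cannot cause a conflict because the palettes are disjoint. Your version merely makes the case split and the palette-size count explicit, which the paper leaves implicit.
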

\begin{proof} 
Since each maximal matching gets its own distinct range of valid colors, a conflict (where two adjacent edges chose the same color) may happen only within the edges of the same set of maximal matching $M_i$ (for any $1\leq i\leq \sqrt{\Delta}$). But since $M_i$ is a valid matching, it does not contain any adjacent edges. Hence, there is no pair of adjacent edges which were colored by the same color.
\end{proof}
Now, we will analyze the round complexity of the algorithm. Computing a valid $(c\cdot \Delta)$-edge coloring of graph $G$ can be done in $O(\log^3{\log{n}}\cdot \mathrm{polylog}({n}))$ rounds by \cite{harris2018distributed}. A maximal matching can be calculated in $O(\log^3{\log{n}}+\log{\Delta})$ by a combination of the randomized algorithm of \cite{barenboim2016locality} with the deterministic algorithm of \cite{fischer2017improved}. Hence, the overall rounds complexity is $O(\log^3{\log{n}}\cdot\mathrm{polyloglog}(n)+\log^3{\log{n}}+\log{\Delta})=O(\log{\Delta}+\log^3{\log{n}}\cdot\mathrm{polyloglog}(n))$. \par
The size of the problem domain is $t\cdot \sqrt{\Delta}$ and the size of the solution domain is $t$. Hence, the contingency factor is $\sqrt{\Delta}$. Consequently we achieve the following Theorem:
\begin{theorem}
Given a graph $G=(V,E)$ one can compute a valid edge dominating set with a generic $(t\cdot \sqrt{\Delta})$-edge coloring in $\tilde{O}(\log{\Delta}+\log^3{\log{n}})$ rounds with contingency factor of $\sqrt{\Delta}$.
\end{theorem}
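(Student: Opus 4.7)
The plan is to invoke Algorithm \ref{algo:dominating_edge_coloring} and then assemble the three ingredients that have already been established above, namely: (i) the dominating-set property of $\bigcup_{i=1}^{\sqrt{\Delta}} M_i$, (ii) the fact that any private per-edge choice from its assigned palette yields a proper $(t\cdot\sqrt{\Delta})$-edge coloring, and (iii) the disjointness of the color ranges assigned to different matchings, which is what makes the scheme privacy-preserving with the stated contingency factor.

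First I would argue correctness. By the first lemma preceding the theorem, $D = \bigcup_{i=1}^{\sqrt{\Delta}} M_i$ is an edge dominating set, because any edge not in $D$ would contradict the maximality of some $M_j$. By the second lemma, since the palettes $\{(i-1)t,\dots,it-1\}$ for $i=1,\dots,\sqrt{\Delta}$ are pairwise disjoint, two adjacent edges that got the same color must lie in the same $M_i$, contradicting that $M_i$ is a matching. Hence every vertex's random draw from its palette produces a legal partial edge coloring of $G$ using $t\cdot\sqrt{\Delta}$ colors, which colors the whole dominating set $D$.

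Next I would bound the round complexity by summing the two non-trivial sub-routines of the algorithm. Step~1, computing a proper $(c\cdot\Delta)$-edge coloring, is executed using the algorithm of Harris, Schneider and Su \cite{harris2018distributed}, which runs in $\tilde{O}(\log^3\log n)$ rounds. Step~3 computes a maximal matching in each color class $E_i$ in parallel; each $E_i$ induces a subgraph of $G$, so we can apply the randomized maximal matching algorithm of \cite{barenboim2016locality} combined with the deterministic algorithm of \cite{fischer2017improved}, giving $O(\log\Delta + \log^3\log n)$ rounds. Steps~2 and~4 are purely local. Adding up and absorbing polyloglog factors into the $\tilde{O}$ notation yields the claimed $\tilde{O}(\log\Delta + \log^3\log n)$ bound.

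Finally I would verify the contingency factor. The problem domain is the palette of $t\cdot\sqrt{\Delta}$ colors used globally on $D$. Each edge $e\in M_i$ has a solution domain of size exactly $t$, namely its assigned block $\{(i-1)t,\dots,it-1\}$, so by Definition \ref{def:contingency-factor} the contingency factor equals $(t\cdot\sqrt{\Delta})/t = \sqrt{\Delta}$. Since the only communication that can leak information about $e$'s color is local to the subroutines used to build the edge coloring $\varphi$ and the matchings $M_i$ (both of which reveal only membership and class index, not the private final draw), the per-edge color is hidden with probability at least $1-1/t$ from any non-endpoint, and the theorem follows. The main obstacle I foresee is purely a bookkeeping one: ensuring that the Harris--Schneider--Su edge-coloring and the Fischer-style matching subroutines do not themselves reveal the final private choice of each edge, but this is immediate because both subroutines complete strictly before the private random selection in Step~4 takes place.
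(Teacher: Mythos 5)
Your proposal is correct and follows essentially the same route as the paper: it assembles the two preceding lemmas for correctness, charges the round complexity to the Harris et al.\ edge-coloring subroutine plus the combined randomized/deterministic maximal-matching subroutine, and computes the contingency factor as $(t\cdot\sqrt{\Delta})/t=\sqrt{\Delta}$ exactly as the paper does. The additional remark that the subroutines terminate before the private random draw, so they cannot leak the final color, is a harmless (and reasonable) elaboration of what the paper leaves implicit.
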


\section{Conclusion}
\label{chap:conclusion}

The computer-science research fields of secure multi-party computation and distributed algorithms were both highly investigated during the last decades. While both of the fields prosper and yield many theoretical and practical results, the connection between these fields was made only seldom. Nevertheless, as implementation of distributed algorithms becomes common in sensor networks and IoT (Internet of Things) architectures, efficient privacy preserving techniques are essential. \par
In this work we present a novel approach, which rather than turning existing algorithms into secure ones, identifies and develops those algorithms that are inherently secure. Naturally, our work focuses on labeling problems. The inherently secure algorithms analyzed in this work are listed in 

We believe that these results establishes a broad basis for further research of both inherently secure algorithm and efficient techniques to translate distributed algorithms into secure algorithms. Such algorithms will open new possibilities for secure interconnection between machines, eliminating the need to mediate through a central secure server. As a consequence, distributed communication would possibly open a free and secure way to transmit data and solve problems.

\subsubsection*{Acknowledgements}
We are grateful to Merav Parter and  Eylon Yogev for helpful remarks.

\bibliographystyle{plainurl}
\bibliography{acmart}

\begin{thebibliography}{10}

\bibitem{awerbuch1989network}
Baruch Awerbuch, Michael Luby, Andrew~V Goldberg, and Serge~A Plotkin.
\newblock Network decomposition and locality in distributed computation.
\newblock In {\em 30th Annual Symposium on Foundations of Computer Science},
  pages 364--369. IEEE, 1989.

\bibitem{barenboim2010sublogarithmic}
Leonid Barenboim and Michael Elkin.
\newblock Sublogarithmic distributed mis algorithm for sparse graphs using
  nash-williams decomposition.
\newblock {\em Distributed Computing}, 22(5-6):363--379, 2010.

\bibitem{barenboim2011deterministic}
Leonid Barenboim and Michael Elkin.
\newblock Deterministic distributed vertex coloring in polylogarithmic time.
\newblock {\em JACM}, 58(5):23, 2011.

\bibitem{barenboim2013distributed}
Leonid Barenboim and Michael Elkin.
\newblock Distributed graph coloring: Fundamentals and recent developments.
\newblock {\em Synthesis Lectures on Distributed Computing Theory},
  4(1):1--171, 2013.

\bibitem{barenboim2014distributed}
Leonid Barenboim, Michael Elkin, and Fabian Kuhn.
\newblock Distributed ($\delta$+1)-coloring in linear (in $\delta$) time.
\newblock {\em SIAM Journal on Computing}, 43(1):72--95, 2014.

\bibitem{barenboim2016locality}
Leonid Barenboim, Michael Elkin, Seth Pettie, and Johannes Schneider.
\newblock The locality of distributed symmetry breaking.
\newblock {\em JACM}, 63(3):20, 2016.

\bibitem{boyle2013communication}
Elette Boyle, Shafi Goldwasser, and Stefano Tessaro.
\newblock Communication locality in secure multi-party computation.
\newblock In {\em Theory of Cryptography Conference}, pages 356--376. Springer,
  2013.

\bibitem{chandran2010improved}
Nishanth Chandran, Juan Garay, and Rafail Ostrovsky.
\newblock Improved fault tolerance and secure computation on sparse networks.
\newblock In {\em International Colloquium on Automata, Languages, and
  Programming}, pages 249--260. Springer, 2010.

\bibitem{cole1986deterministic}
Richard Cole and Uzi Vishkin.
\newblock Deterministic coin tossing and accelerating cascades: micro and macro
  techniques for designing parallel algorithms.
\newblock In {\em Proceedings of the eighteenth annual ACM symposium on Theory
  of computing}, pages 206--219. ACM, 1986.

\bibitem{erdos1985families}
Paul Erd{\"o}s, Peter Frankl, and Zolt{\'a}n F{\"u}redi.
\newblock Families of finite sets in which no set is covered by the union ofr
  others.
\newblock {\em Israel Journal of Mathematics}, 51(1):79--89, 1985.

\bibitem{fischer2017improved}
Manuela Fischer.
\newblock Improved deterministic distributed matching via rounding.
\newblock {\em Distributed Computing}, pages 1--13, 2017.

\bibitem{garay2008almost}
Juan~A Garay and Rafail Ostrovsky.
\newblock Almost-everywhere secure computation.
\newblock In {\em Annual International Conference on the Theory and
  Applications of Cryptographic Techniques}, pages 307--323. Springer, 2008.

\bibitem{goldreich1987play}
Oded Goldreich, Silvio Micali, and Avi Wigderson.
\newblock How to play any mental game.
\newblock In {\em Proc. of the 19th Symp. on Theory of Computing}, pages
  218--229, 1987.

\bibitem{halevi2017non}
Shai Halevi, Yuval Ishai, Abhishek Jain, Ilan Komargodski, Amit Sahai, and
  Eylon Yogev.
\newblock Non-interactive multiparty computation without correlated randomness.
\newblock In {\em International Conference on the Theory and Application of
  Cryptology and Information Security}, pages 181--211. Springer, 2017.

\bibitem{halevi2016secure}
Shai Halevi, Yuval Ishai, Abhishek Jain, Eyal Kushilevitz, and Tal Rabin.
\newblock Secure multiparty computation with general interaction patterns.
\newblock In {\em Proceedings of the 2016 ACM Conference on Innovations in
  Theoretical Computer Science}, pages 157--168. ACM, 2016.

\bibitem{harris2018distributed}
David~G Harris.
\newblock Distributed local approximation algorithms for maximum matching in
  graphs and hypergraphs.
\newblock In {\em 2019 IEEE 60th Annual Symposium on Foundations of Computer
  Science (FOCS)}, pages 700--724. IEEE, 2019.

\bibitem{johansson1999simple}
{\"O}jvind Johansson.
\newblock Simple distributed $\delta$+ 1-coloring of graphs.
\newblock {\em Information Processing Letters}, 70(5):229--232, 1999.

\bibitem{kuhn2009weak}
Fabian Kuhn.
\newblock Weak graph colorings: distributed algorithms and applications.
\newblock In {\em Proceedings of the twenty-first annual symposium on
  Parallelism in algorithms and architectures}, pages 138--144. ACM, 2009.

\bibitem{linial1987distributive}
Nathan Linial.
\newblock Distributive graph algorithms global solutions from local data.
\newblock In {\em Foundations of Computer Science, 1987.}, pages 331--335.
  IEEE, 1987.

\bibitem{linial1993low}
Nathan Linial and Michael Saks.
\newblock Low diameter graph decompositions.
\newblock {\em Combinatorica}, 13(4):441--454, 1993.

\bibitem{luby1993removing}
Michael Luby.
\newblock Removing randomness in parallel computation without a processor
  penalty.
\newblock {\em Journal of Computer and System Sciences}, 47(2):250--286, 1993.

\bibitem{panconesi2001some}
Alessandro Panconesi and Romeo Rizzi.
\newblock Some simple distributed algorithms for sparse networks.
\newblock {\em Distributed computing}, 14(2):97--100, 2001.

\bibitem{parter2017distributed}
Merav Parter and Eylon Yogev.
\newblock Distributed computing made secure: A new cycle cover theorem.
\newblock {\em arXiv preprint arXiv:1712.01139}, 2017.

\bibitem{parter2019secure}
Merav Parter and Eylon Yogev.
\newblock Secure distributed computing made (nearly) optimal.
\newblock In {\em Proc. of 38th Symp. on Principles of Distributed Computing},
  pages 107--116, 2019.

\bibitem{rozhovn2020polylogarithmic}
V{\'a}clav Rozho{\v{n}} and Mohsen Ghaffari.
\newblock Polylogarithmic-time deterministic network decomposition and
  distributed derandomization.
\newblock In {\em Proceedings of the 52nd Annual ACM SIGACT Symposium on Theory
  of Computing}, pages 350--363, 2020.

\bibitem{yao1982protocols}
Andrew~C Yao.
\newblock Protocols for secure computations.
\newblock In {\em Foundations of Computer Science, 1982. SFCS'08.}, pages
  160--164. IEEE, 1982.

\end{thebibliography}

\newpage
\appendix
\section{Detailed Definitions of Graph Theory Problems}
\label{app:problem-def}
\subsection{Labeling Problems}
\textbf{Graph Coloring: }
The \textit{graph coloring} (or \textit{vertex coloring}) problem has its origins in the 19th century. Throughout the last few decades solutions for the graph coloring problem became the basis of many resource allocation and scheduling algorithms. \par
Formally, a function $\varphi :V \rightarrow [\alpha]$ is a legal \textit{$\alpha$-Coloring} of graph $G=(V,E)$ if and only if, for each $\{v,u\}\in E \rightarrow \varphi(v) \neq \varphi(u)$. The chromatic number of graph $G$, $\chi(G)$, is the minimum $\alpha$ such that a legal $\alpha$-coloring $\varphi:V\rightarrow [\alpha]$ exists. \par
While finding and validating the chromatic number requires knowledge about the topology of the entire graph, validation of $(\Delta+1)$-coloring requires knowledge only about the local neighborhood of each node. Furthermore, a construction of a valid graph coloring with $\alpha \geq \Delta + 1$ colors, can be achieved with less than $O(D(G))$ rounds. Nevertheless, finding the optimal algorithm for $(\Delta+1)$-coloring is still a challenge while deterministic construction of such a coloring requires at least $O(\log^*{n})$ rounds \cite{linial1987distributive}. A randomized $(\Delta+1)$-coloring can be done in $\mathrm{poly}(\log\log{n})$ rounds \cite{rozhovn2020polylogarithmic}. \par

\textbf{Edge Coloring: }
\textit{Edge coloring} is somewhat a complimentary problem to the graph coloring. While in graph coloring the coloring function assigns a color to each vertex, in the edge coloring we assign a color to each edge. Formally, a valid $\alpha$-edge coloring of graph $G=(V,E)$ is defined as a function $\varphi: E \rightarrow [\alpha]$ such that for any vertex $v \in V$ there are no two distinct vertices $u,w \in \Gamma(v)$ such that $\varphi((v,u)) = \varphi((v,w))$. For a graph with maximum degree $\Delta$, each of the endpoints of any edge $e=(u,v)$, is incident to up to $\Delta -1$ edges other than $e$. Hence, even if $e$'s $2\cdot (\Delta -1)$ incident edges have already selected distinct colors, a ground set of $(2\Delta-1)$ colors will be enough to achieve a valid edge coloring depending only on knowledge about the local neighborhood. Therefore, an $\alpha$-degree coloring with $\alpha \geq (2\Delta -1)$ is of special interest in the field of distributed algorithms. \par

\textbf{Forest Decomposition: }
A \textit{forest} is such a graph that all its connected components are trees. In other words, a forest is a graph which contains no cycles. A \textit{forest decomposition} of graph $G=(V,E)$ is an edge-disjoint partition of $G$, to $\alpha$ sub-graphs $\mathcal{F}_1,\mathcal{F}_2,...,\mathcal{F}_\alpha$ such that $\mathcal{F}_i$ is a forest for every $1 \leq i \leq \alpha$. Any graph $G$ with maximum degree $\Delta$ can be decomposed into $\Delta$ forests. One way to define the \textit{arboricity} of a graph $G$ is as the minimal number of forests which are enough to fully cover $G$. The arboricity is always at most the maximum degree ($\Delta$), and in many graph families is much smaller than the maximum degree. \par
Forest decomposition was proved to be a good building block for efficient distributed algorithms which solve other problems like coloring and MIS (such as in \cite{barenboim2010sublogarithmic}) especially for graphs with bounded arboricity. \par

\textbf{Network Decomposition: }
A graph with a $\mathrm{c}$-labeling of the vertices is \textit{strong} (respectively, \textit{weak}) \textit{$(\mathrm{d},\mathrm{c})$-network-decomposition}, if each connected component of vertices with the same label has \textit{strong} (respectively, \textit{weak}) diameter at most $\mathrm{d}$. Formally, given a graph $G=(V,E)$ and a vertex-disjoint partition of graph $G=(V,E)$ to $\alpha$ clusters $\mathcal{C}_1,\mathcal{C}_2,...,\mathcal{C}_\alpha$, we define an auxiliary graph $\mathcal{G}=(\mathcal{V},\mathcal{E})$ such that $\mathcal{V} = \{\mathcal{C}_1,\mathcal{C}_2,...,\mathcal{C}_\alpha\}$ and $(\mathcal{C}_u,\mathcal{C}_v)\in \mathcal{E}$ ($\mathcal{C}_u,\mathcal{C}_v\in \mathcal{V}$) iff $\exists (u,v) \in E$ such that $u\in\mathcal{C}_u$ and $v\in\mathcal{C}_v$.
The partition $\mathcal{C}_1,\mathcal{C}_2,...,\mathcal{C}_\alpha$ is a valid \textit{$(\mathrm{d},\mathrm{c})$-network decomposition} if (1) the chromatic number of $\mathcal{G}$ is at most $\mathrm{c}$ and (2) the distance between each pair of vertices contained in the same cluster $v,u\in \mathcal{C}_i$ is at most $\mathrm{d}$. In strong network decomposition, the distance is measured with respect to the cluster $\mathcal{C}_i$ (in other words, $dist_{\mathcal{C}_i}(v,u) \leq \mathrm{d}$). In weak network decomposition, the distance is measured with respect to the original graph $G$ (in other words, $dist_G(v,u) \leq \mathrm{d}$).\par 
Different algorithms yield different kind of network decompositions which satisfy different values of $\mathrm{d}$ and $\mathrm{c}$. One of the most valuable decompositions, which presents good trade-off between the radius of each cluster and the chromatic number of the auxiliary graph is an $(O(\log{n}),O(\log{n})$-network decomposition. \par
Network Decomposition is a valuable building block for algorithms to solve other graph theory problems. For instance, given a $(\mathrm{d},\mathrm{c})$-network decomposition one can solve MIS and $(\Delta+1)$-coloring in $O(\mathrm{d}\cdot \mathrm{c})$ rounds \cite{barenboim2011deterministic}. The basic idea is that each cluster learns its own topology in $O(\mathrm{d})$ rounds, and then solves any problem on the local cluster. Later, all the cluster-local solutions are collected to form an overall solution for the entire graph. \par

\subsection{Bipartition Problems} 
\textbf{Maximal Independent Set (MIS): }
A set $I \in V$ of vertices is called an \textit{Independent Set} (\textit{IS}) if and only if for each pair of vertices $v,u \in I$ there is no edge $(v,u) \in E$. A \textit{Maximal Independent Set} (\textit{MIS}) is an independent set $I$ that is maximal with respect to addition of vertices, i.e. there is no vertex $v \in V \setminus I$ such that $I \cup \{v\}$ is a valid independent set. \par

\textbf{Maximal Matching (MM): }
A set of edges $M \in E$ is called a \textit{Matching} if and only if every vertex $v \in V$ is connected to at most one edge in $M$, i.e. there is no pair of vertices $u_1,u_2 \in V$ ($u_1 \neq u_2$) such that $\{(u_1,v),(u_2,v)\} \subseteq M$. A \textit{Maximal Matching} (\textit{MM}) is a matching that is maximal with respect to addition of edges, i.e. there is no edge $e \in E \setminus M$ such that $M \cup \{e\}$ is a valid matching. \par

\end{document}